\newtheorem{claim}{Claim}[chapter]
\newtheorem{observation}{Observation}[chapter]
\newtheorem{corollary}{Corollary}[chapter]
\newtheorem{theorem}{Theorem}[chapter]
\newtheorem{lemma}{Lemma}[chapter]
\newtheorem{definition}{Definition}[chapter]
\newcommand{\OPT}{\mathsf{O}\mathsf{P}\mathsf{T}}
\newcommand{\ALG}{\mathsf{A}\mathsf{L}\mathsf{G}}
\newcommand{\LP}{\mathsf{L}\mathsf{P}}
\newcommand{\given}{\;|\;}
\newcommand{\PlusSign}{\text{``$+$''}}
\newcommand{\MinusSign}{\text{``$-$''}}
\newcommand{\ONE}{\mathds{1}}
\newcommand{\openLP}{\hrule height 0.8pt\rule{0pt}{1pt}} 
\newcommand{\closeLP}{\rule{0pt}{1pt}\hrule height 0.4pt\rule{0pt}{1pt}} 
\title{Four Algorithms for Correlation Clustering: A Survey}
\author{Jafar Jafarov}
\date{May 2020}
\begin{document}

\maketitle

\makecopyright

\tableofcontents

\acknowledgments

\begin{flushleft}
I am deeply indebted to Prof. Yury Makarychev for introducing me to the subject and for his wise guidance and continuous support.\\
\vspace{0.2 cm}
I am grateful to Prof. Janos Simon for his continuous support and helpful advise.\\ 
\vspace{0.2 cm}
I would like to express my gratitude to Prof. Aaron Potechin for devoting his time and energy to this work and for being my committee member.\\
\vspace{0.2 cm}
Lastly, I am grateful to Aritra Sen, Tushant Mittal, Gokalp Demirci and to all my friends and colleagues in the TCS group at the University of Chicago.

\end{flushleft}

\abstract
In the Correlation Clustering problem, we are given a set of objects with pairwise similarity information. Our aim is to partition these objects into clusters that match this information as closely as possible. More specifically,  the pairwise information is given as a weighted graph $G$ with its edges labelled as ``similar" or ``dissimilar" by a binary classifier. The goal is to produce a clustering that minimizes the weight of ``disagreements": the sum of the weights of similar edges across clusters and dissimilar edges within clusters.

In this exposition we focus on the case when $G$ is complete and unweighted. We explore four approximation algorithms for the Correlation Clustering problem under this assumption. In particular, we describe the following algorithms: \textbf{(i)} the $17429-$approximation algorithm by~\citet*{BBC04}, \textbf{(ii)} the $4-$approximation algorithm by~\citet*{CGW03} \textbf{(iii)} the $3-$approximation algorithm by~\citet*{ACN08} \textbf{(iv)} the $2.06-$approximation algorithm by~\citet*{CMSY15}. 
\mainmatter

\chapter{Introduction}
In the Correlation Clustering problem, we are given a set of objects with pairwise similarity information. Our aim is to partition these objects into clusters that match this information as closely as possible. The pairwise information is represented as a weighted graph $G$ whose edges are labelled as ``positive/similar'' and ``negative/dissimilar'' by a noisy binary classifier. The goal is to find a clustering $\mathcal{C}$ that is consistent as much as possible with the edge labels. Two objectives for the Correlation Clustering problem have been considered in the literature.

A positive edge is in agreement with $\mathcal{C}$, if its endpoints belong to the same cluster;
and a negative edge is in agreement with $\mathcal{C}$ if its endpoints belong to different clusters. The first objective is to maximize the weight of edges agreeing with $\mathcal{C}$. We call this objective the MaxAgree objective.

Similarly, a positive edge is in disagreement with $\mathcal{C}$, if its endpoints belong to different clusters; and a negative edge is in disagreement with $\mathcal{C}$ if its endpoints belong to the same cluster. The second objective is to minimize the weight of edges disagreeing with $\mathcal{C}$. We call this objective the MinDisagree objective. These two objectives are equivalent at optimality, but, differ from the approximation perspective.

Observe that if a binary classifier is not noisy, i.e., if there exists a clustering which is consistent with all the edge labels, then it is easy to find: simply remove all negative/dissimilar edges and output the connected components of the remaining graph. Thus, the interesting case is when the binary classifier is noisy.

Both MaxAgree and MinDisagree objectives have been extensively studied in the literature since they were introduced by~\citet*{BBC04}. There are currently two standard settings for Correlation Clustering which we will refer to as (1) Correlation Clustering on Complete Graphs and (2) Correlation Clustering with Noisy Partial Information. In the former setting, we assume that graph $G$ is complete and all edge weights are the same, i.e., $G$ is unweighted. In the latter setting, we do not make any assumptions on the graph $G$. Thus, edges can have arbitrary weights and some edges may be missing. These settings are quite different
from the computational perspective.
\paragraph{Correlation Clustering on Complete Graphs:} For MinDisagree objective the first constant-factor approximation algorithm was given by~\citet*{BBC04}. \citet*{CGW03} gave a $4-$approximation algorithm for the problem. \citet*{ACN08} gave two different algorithms with approximation factors of $3$ and $2.5$. Finally, a $2.06-$approximation algorithm was given by~\citet*{CMSY15}. This is currently the best approximation guarantee for MinDisagree objective.

However, MaxAgree objective is easier than MinDisagree objective. There is a trivial $2-$approximation algorithm: if the number of positive edges is greater than the number of negative edges output $G$ as a single cluster, otherwise output each vertex as a singleton cluster. Furthermore, \citet{BBC04} gave a polynomial time approximation scheme (PTAS) for MaxAgree objective, i.e., it can be approximated within any constant in polynomial time. On the other hand,~\citet*{CGW03} proved that MinDisagree objective is APX-hard, i.e., it is NP-hard to approximate it within a certain constant $c>1.$
\paragraph{Correlation Clustering with Noisy Partial Information:} For MinDisagree objective \citet*{CGW03} and \citet*{DEFI06} independently gave an $O(\log n)$ approximation algorithm. Both algorithms use a linear programming relaxation and are heavily inspired by the region growing approach for the Multicut problem by~\citet*{GVY96}. Both \citet{CGW03} and \citet{DEFI06} complement their results with a matching integrality gap of $\Omega(\log n)$, thus implying that one cannot hope for a better approximation guarantee using an LP-based approach. Interestingly,~\citet{CGW03} and \citet{DEFI06} showed that this inspiration by~\citet{GVY96} was not coincidental: they proved that in this setting MinDisagree objective is equivalent to Multicut from the approximation perspective. In particular, they gave an approximation preserving reduction for both directions between two problems. This combined with the hardness result for Multicut by~\citet*{CKKRS06} show that $O(\log n)$ is likely to be the best possible approximation for this problem. More specifically, it is NP-hard to obtain any constant factor approximation algorithm for MinDisagree if the Unique Games Conjecture is true.

For MaxAgree objective \citet*{CGW03} and~\citet*{Swamy04} gave a $0.766-$approximation algorithm based on semidefinite programming (SDP) relaxation. Furthermore, \citet{CGW03} proved that MaxAgree is APX-hard. For other models of Correlation Clustering see~\cite{CDK14},~\cite{AddadLMNP21},~\cite{ACG+15, AW21},~\cite{MSS10},~\cite{LMV+21},~\cite{AALZ12},~\cite{CMSY15},~\cite{MW10},~\cite{MMV15-CC},~\cite{JafarovKMM20},~\cite{jafarov21a}.

\paragraph{}Correlation Clustering is different from other classical clustering problems in that the given data is qualitative rather than quantitative. More specifically, in Correlation Clustering objects are embedded in a graph and a similarity information between two objects is given as an edge label as opposed to the $k$-means or $k$-median clustering problems where objects are embedded in a metric space and a similarity information between two objects is given by a distance function. Furthermore, the number of clusters $k$ is not given as a separate parameter; the number of clusters in an optimal clustering solely depends on the instance. Thus, correlation clustering finds applications in many different problems in machine learning, biology, data mining and other areas of science and engineering (see~\cite{tang2016multi},~\cite{pan2015},~\cite{FS03},~\cite{CR01},~\cite{BSY99}).
\section{Organization}
In this exposition we focus on MinDisagree objective in the Correlation Clustering on Complete Graphs setting which will be simply referred as the Correlation Clustering problem for the rest of the exposition. In Chapter~\ref{Notations} we introduce some notations and definitions which will be useful throughout the exposition. In Chapter~\ref{BBC} we describe the first constant-factor approximation algorithm for the Correlation Clustering problem with an approximation guarantee of $17429$ by~\citet*{BBC04}. In Chapter~\ref{CGW} we analyze a $4-$approximation algorithm based on a natural linear programming (LP) relaxation by~\citet*{CGW03}. In Chapter~\ref{gap_CGW} we show an integrality gap of $2$ for this LP relaxation by~\citet{CGW03}. In Chapter~\ref{ACN} we describe a $3-$approximation algorithm by~\citet*{ACN08}. It is a simple randomized algorithm which is not based on a LP relaxation. Finally, in Chapter~\ref{CMSY} we describe a $2.06-$approximation algorithm by~\citet*{CMSY15}. The algorithm is based on the LP relaxation introduced by~\citet*{CGW03}. It is currently the best approximation algorithm for the Correlation Clustering problem and its approximation factor almost matches with the integrality gap of $2$, given by~\citet{CGW03}.
\chapter{Notations and Definitions}\label{Notations}
Let $G=(V,E^+,E^-)$ be a complete unweighted graph where $E^+$ is the set of edges labeled as positive/similar, and $E^-$ is the set of edges labeled as negative/dissimilar. We use $``\PlusSign"$ and $``\MinusSign"$ to denote labels of positive and negative edges, respectively. Let $N^+(u)=\{v\vert\; (u,v)\in E^+\}\cup \{u\}$ and $N^-(u)=\{v\vert\; (u,v)\in E^-\}$ denote the vertices connected to $u$ with positive and negative edges, respectively.

In a clustering $\mathcal{C}$, we call an edge $(u,v)$ a mistake if it is in disagreement with $\mathcal{C}$. When $(u,v)\in E^+$, we call the mistake a positive mistake, otherwise it is called a negative mistake. We denote the total number of mistakes of a clustering $\mathcal{C}$ by $m_{\mathcal{C}}.$ Furthermore, we use $m^+_{\mathcal{C}}$ and $m^-_{\mathcal{C}}$ to denote the number of positive and negative mistakes of $\mathcal{C}$, respectively.

Finally, let $\OPT$ denote the optimal clustering on $G$ and $\LP$ denote the value of an optimal solution of the linear program defined in Chapter~\ref{CGW}.
\chapter{Bansal-Blum-Chawla Algorithm}\label{BBC}
In this chapter we present the $17429-$approximation algorithm for the Correlation Clustering problem by~\citet*{BBC04}. Although the factor is large, it is the first algorithm with any approximation guarantee.

Firstly, we show a natural lower bound for $m_{\OPT}$ which will be used to upper bound the number of mistakes made by the algorithm. We start with defining a ``bad" triangle.
\begin{definition}
Let $T$ be a triangle (a complete graph on $3$ vertices). $T$ is bad if it contains two positive edges and one negative edge.
\end{definition}

Observe that a bad triangle always leads to a mistake regardless of how its endpoints are clustered. Therefore, the number of edge disjoint bad triangles is a lower bound for $\OPT.$

\begin{definition}
Let $\mathcal{T}$ denote the set of bad triangles and assign to each bad triangle $T\in \mathcal{T}$ a nonnegative real number $r_T\in [0,1]$. A set $\{r_T\}_{T\in\mathcal{T}}$ is called a fractional packing of bad triangles if
\begin{align}\label{fractionalPacking}
    \sum\limits_{\substack{T\in \mathcal{T}: \\ e\in T}}r_T\leq 1, \forall e\in E.
\end{align}
\end{definition}
We generalize the above observation as follows:
\begin{lemma}\label{fractionalPacking_Lemma_BBC}
For any fractional packing of bad triangles $\{r_T\}_{T\in \mathcal{T}}$, we have $\sum\limits_{T\in \mathcal{T}} r_T\leq m_{\OPT}.$
\end{lemma}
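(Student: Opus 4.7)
The plan is a standard double-counting argument, essentially an LP-duality style calculation. The key observation that drives everything is the one already highlighted just before the lemma: in any clustering of three vertices, it is impossible to simultaneously place the two positive edges of a bad triangle inside clusters and the negative edge between clusters. Consequently, for every bad triangle $T \in \mathcal{T}$, the optimal clustering $\OPT$ must make at least one mistake on the edges of $T$. Denote by $M \subseteq E$ the set of edges on which $\OPT$ makes a mistake, so $|M| = m_{\OPT}$, and let $m_{\OPT}(T) := |\{e \in T : e \in M\}| \geq 1$ for each $T \in \mathcal{T}$.

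With this in hand, I would write
\begin{align*}
\sum_{T \in \mathcal{T}} r_T \;\leq\; \sum_{T \in \mathcal{T}} r_T \cdot m_{\OPT}(T) \;=\; \sum_{T \in \mathcal{T}} r_T \sum_{e \in T} \ONE[e \in M],
\end{align*}
using $r_T \geq 0$ and $m_{\OPT}(T) \geq 1$ for the first inequality. Next I would swap the order of summation to group by edge, obtaining
\begin{align*}
\sum_{T \in \mathcal{T}} r_T \sum_{e \in T} \ONE[e \in M] \;=\; \sum_{e \in M} \;\sum_{\substack{T \in \mathcal{T}:\\ e \in T}} r_T.
\end{align*}
Finally, applying the fractional packing constraint~\eqref{fractionalPacking} to each inner sum gives $\sum_{T \ni e} r_T \leq 1$, so the whole expression is at most $\sum_{e \in M} 1 = |M| = m_{\OPT}$, completing the proof.

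There is not really a hard step here; the entire argument is a transposition of summations combined with the trivial per-triangle lower bound. The only subtlety worth flagging is making sure the constraint is invoked on the correct side: one uses $\sum_{T \ni e} r_T \leq 1$ rather than any bound going the other way, which is precisely why the definition of a fractional packing is stated with that inequality. Nothing in the argument requires the $r_T$ to be rational or the packing to be induced by an integral one, so the lemma holds for arbitrary fractional packings and will be directly applicable in the next chapter when a specific packing is exhibited to lower-bound $m_{\OPT}$.
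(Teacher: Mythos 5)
Your proof is correct and is essentially the paper's own argument: both use the fact that $\OPT$ makes at least one mistake on every bad triangle together with an exchange of summation order and the packing constraint, yours merely written starting from $\sum_{T} r_T$ rather than from $m_{\OPT}$. No gaps.
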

\begin{proof}
Let $M$ be the set of edges in disagreement with $\OPT.$ Then
\begin{align*}
m_{\OPT}=\sum\limits_{e\in M}1\geq \sum\limits_{e\in M}\sum\limits_{\substack{T\in \mathcal{T}: \\ e\in T}}r_T=\sum\limits_{T\in \mathcal{T}}\vert M\cap T\vert r_T\geq \sum\limits_{T\in \mathcal{T}}r_T.
\end{align*}
where the first inequality follows from~(\ref{fractionalPacking}) and the second inequality follows from $\vert M\cap T \vert\geq 1$ since $\OPT$ always makes at least one mistake on a bad triangle. 
\end{proof}

 Before delving into details we give an outline of the proof, which can be divided into three steps: 
 \begin{itemize}
     \item Show that the number of mistakes of a so-called ``clean" clustering is close to $m_{\OPT}$. 
     \item Prove the existence of a clean clustering by manipulating $\OPT.$
     \item Give an algorithm to find a clean clustering.
 \end{itemize}
We define a ``good" vertex which is central to the definition of a clean clustering. Loosely speaking, a vertex is good if it is assigned to a cluster such that most of its incident edges are not in disagreement with the clustering, and a clustering is clean if all its vertices are good.
\begin{definition}
A vertex $v$ is called $\delta-$good with respect to $C$, where $C\subseteq V$, if it satisfies the following:
\begin{itemize}
    \item[$-$] $\vert N^{+}(v)\cap C \vert\geq (1-\delta)\vert C\vert$
    \item[$-$] $\vert N^{+}(v)\cap V\setminus C\vert \leq \delta \vert C\vert$.
\end{itemize}
If a vertex $v$ is not $\delta-$good with respect to $C$, then it is called $\delta-$bad
with respect to $C$. Finally, a set $C$ is $\delta-$clean if all $v\in C$ are $\delta-$good $C$.
\end{definition}
Intuitively, one should expect the number of mistakes of a clean clustering to be close to the number of mistakes of $\OPT$ since most of the edges are not in disagreement with a clean clustering. We formalize this in the following lemma.
\begin{lemma}\label{clean_lemma_BBC}
Given a clustering of $V$ in which all clusters are $\delta$-clean for
some $\delta\leq \frac{1}{4}$, then there exists a fractional packing $\{r_T\}_{T\in\mathcal{T}}$ such that the number of mistakes made by this clustering is at most $4\sum\limits_{T\in \mathcal{T}}r_T.$
\end{lemma}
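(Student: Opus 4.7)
The plan is to construct an explicit fractional packing $\{r_T\}$ supported on the ``clean'' bad triangles (those containing exactly one mistake edge), whose total weight is at least $m_\mathcal{C}/4$. I classify clean bad triangles into two types: \emph{type A} triangles lying entirely inside one cluster $C$ (whose unique mistake is the single negative edge among its three internal edges), and \emph{type B} triangles with two vertices $a,b$ in some cluster $C$ and a third vertex $c$ in a different cluster $C'$, with $(a,b)\in E^+$ (within $C$, not a mistake), one of the cross-cluster edges $(a,c),(b,c)$ positive (the unique mistake), and the other negative (not a mistake). I set $r_T = 1/n_C$ for every type A triangle in $C$, $r_T = 1/(n_C+n_{C'})$ for every type B triangle spanning $C$ and $C'$, and $r_T=0$ for every other bad triangle.

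For the lower bound on $\sum_T r_T$, consider a negative mistake $(u,v)\in C$. By $\delta$-goodness of $u$ and $v$, $|N^+(u)\cap N^+(v)\cap C|\ge (1-2\delta)n_C$, so at least $(1-2\delta)n_C$ type A triangles contain $(u,v)$. Since each type A triangle has a unique negative-mistake edge, summing over all negative mistakes gives a type A contribution of $\ge (1-2\delta)m^-_\mathcal{C}\ge m^-_\mathcal{C}/2$. For a positive mistake $(u,v)$ with $u\in C$, $v\in C'$, $\delta$-goodness of $u$ and $v$ gives at least $(1-\delta)n_C-\delta n_{C'}-1$ third vertices $w\in C$ with $(u,w)\in E^+,(v,w)\in E^-$, and symmetrically at least $(1-\delta)n_{C'}-\delta n_C-1$ vertices $w\in C'$ with $(u,w)\in E^-,(v,w)\in E^+$, so at least $(1-2\delta)(n_C+n_{C'})-2$ type B triangles contain $(u,v)$. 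Moreover, the very existence of a positive mistake edge $(u,v)$ forces $\delta n_C\ge 1$, hence $n_C\ge 1/\delta\ge 4$ (and likewise $n_{C'}\ge 4$), so $n_C+n_{C'}\ge 8$. Therefore the type B contribution per positive mistake is $\ge \bigl((1/2)(n_C+n_{C'})-2\bigr)/(n_C+n_{C'})\ge 1/4$; summing, the type B contribution is $\ge m^+_\mathcal{C}/4$. Altogether, $\sum_T r_T\ge m^-_\mathcal{C}/2+m^+_\mathcal{C}/4\ge m_\mathcal{C}/4$.

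It remains to verify $\sum_{T\ni e}r_T\le 1$ for every edge $e$. The mistake edges are easy: a negative mistake inside $C$ appears only in type A triangles, because a type B triangle with a negative mistake edge would be forced to contain two further positive mistakes, violating cleanness; with at most $n_C-2$ type A triangles containing the edge and weight $1/n_C$ each, the total is below $1$. A positive mistake between $C$ and $C'$ appears only in type B triangles, at most $n_C+n_{C'}-2$ of them with combined weight below $1$. The delicate case is a positive non-mistake edge inside $C$: by $\delta$-goodness of its endpoints, the number of type A triangles through it is at most $2\delta n_C$, and the number of type B triangles through it (summed across all choices of outside cluster $C'$) is also at most $2\delta n_C$; each has weight $\le 1/n_C$, giving total $\le 4\delta\le 1$. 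Non-mistake negative cross-cluster edges are handled analogously, via the $\delta$-bound on common positive neighbors across the two relevant clusters. The main obstacle is this simultaneous bookkeeping, where a non-mistake edge may receive contributions from both the type A and type B families; restricting the support of $r_T$ to clean triangles is the crucial structural simplification, because each clean triangle is then naturally in bijection with the single mistake it contains, keeping the lower bound on $\sum_T r_T$ transparent.
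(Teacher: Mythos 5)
Your construction is correct, but it is a genuinely different route from the paper's. The paper proceeds greedily and mistake-by-mistake: it runs a sequential procedure that assigns to each negative mistake one bad triangle inside its cluster (and, in a second, separate procedure, to each positive mistake one bad triangle spanning the two clusters), arguing via cleanness that an unused third vertex always exists and that each edge is reused at most twice; it then puts $r_T=\tfrac12$ on the chosen triangles, obtaining two packings of value at least $\tfrac12 m^-_{\mathcal C}$ and $\tfrac12 m^+_{\mathcal C}$ respectively, and finishes with $m_{\mathcal C}\le 2\max\{m^-_{\mathcal C},m^+_{\mathcal C}\}\le 4\sum_T r_T$ by keeping whichever packing is larger. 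You instead write down a single explicit packing supported on all one-mistake bad triangles, with uniform weights $1/|C|$ and $1/(|C|+|C'|)$, and verify the two sides globally: the per-mistake lower bounds ($\ge 1-2\delta\ge\tfrac12$ for each negative mistake, and $\ge\tfrac12-\tfrac{2}{|C|+|C'|}\ge\tfrac14$ for each positive mistake, using the nice observation that a positive cross-cluster edge forces $|C|,|C'|\ge 1/\delta\ge 4$), and the edge constraints, whose worst case is the positive within-cluster edge with total load at most $4\delta\le 1$. What your approach buys is the elimination of the sequential ``not previously picked'' bookkeeping and of the need to build two packings and take a maximum; both types of mistakes are charged simultaneously, and the constant $4$ emerges transparently from $m^-_{\mathcal C}/2+m^+_{\mathcal C}/4\ge m_{\mathcal C}/4$ together with $4\delta\le 1$. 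What the paper's approach buys is that cleanness is only needed to guarantee existence of a fresh third vertex, and the packing constraint is immediate from the weights being $\tfrac12$ and each edge being reused at most twice. Two small remarks: your stated reason that a within-cluster negative mistake cannot lie in a type B triangle (``it would force two further positive mistakes'') is unnecessary and not quite the right justification --- it is immediate from your definition of the support, since type B triangles have a \emph{positive} within-cluster edge; and the cross-cluster negative non-mistake case you only sketch does check out, since the number of type B triangles through such an edge is at most $\delta|C|+\delta|C'|$, giving load at most $\delta\le\tfrac14$.
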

\begin{proof}
Firstly, we give a procedure to associate a negative mistake $(u,v)$ with a bad triangle $T_{uvw}$ such that $u,v,w\in C_i$ for some $i\in [k].$ Then we show that the number of bad triangles chosen by this procedure sharing an edge is bounded above by some constant. This gives us a fractional packing of bad triangles with small total value.

Let $\mathcal{C}=\{C_1,\dots, C_k\}$ be a $\delta-$clean clustering. Let $(u,v)\in C_i\times C_i$ be a negative edge in disagreement that has not been considered so far. We want to choose a vertex $w\in C_i$ such that \textbf{(1)} both $(u,w)$ and $(v,w)$ are positive, and \textbf{(2)} it has not been picked by a negative edge in disagreement $(u,v')$ or $(u',v)$ in the previous steps. We associate $(u,v)$ with the bad triangle $T_{uvw}=(u,v,w).$ Observe that the second condition can be thought of as an attempt to enforce edge-disjointness on the triangles chosen by the procedure. We now show that such $w\in C_i$, in other words, such a bad triangle $T_{uvw}$ always exists.

Since $C_i$ is $\delta-$clean, we have $\vert N^{-}(u)\cap C_i\vert \leq \delta \vert C_i \vert$ and $\vert N^{-}(v)\cap C_i\vert \leq \delta \vert C_i \vert$. Thus there exist at least $(1-2\delta)\vert C_i \vert$ many vertices $w\in C_i$ such that both $(u,w),(v,w)\in E^{+}.$ For the same reasons, at most $(2\delta- 2)\vert C_i\vert$ many of these vertices might have been chosen by negative mistakes $(u,v')$ or $(u',v)$ before. Therefore, there are at least $(1-4\delta)\vert C_i \vert + 2$ many vertices available for $(u,v)$. This number is always positive since $\delta\leq \frac{1}{4}$.

Now we bound the number of bad triangles sharing a certain edge. Observe that a positive edge $(u,w)\in C_i\times C_i$ can appear in at most two bad triangles one due to a negative mistake $(u,v)\in C_i\times C_i$ and second possibly due to a negative mistake $(w,x)\in C_i\times C_i$. Furthermore, no edge having end points in different clusters can appear since the procedure chooses only in-cluster bad triangles. Thus, assigning $r_{uvw}=\frac{1}{2}$ for bad triangles $T_{uvw}$ chosen by the procedure and $r_{uvw}=0$ for bad triangles not chosen by the procedure we satisfy~(\ref{fractionalPacking}). Since for each negative mistake a bad triangle is picked, we have a fractional packing of bad triangles such that $\sum\limits r_{uvw}\geq \frac{1}{2}m^{-}_{\mathcal{C}}. $ 

Secondly, we give a similar procedure to associate a positive mistake $(u,v)$ with a bad triangle $T_{uvw}$ such that $u,v,w\in C_i\cup C_j$ for some $i,j\in [k].$ Using similar arguments we show that each edge can appear in at most constant number of bad triangles chosen by the procedure. This gives us a fractional packing of bad triangles with small total value which is different from the one constructed for negative mistakes.

Let $(u,v)\in C_i\times C_j$ for some $i\neq j$, be a positive edge in disagreement that has not been considered so far. Without loss of generality let $\vert C_i \vert\geq \vert C_j \vert$. We want to choose a vertex $w\in C_i$ such that \textbf{(1)} $(u,w)\in E^{+}$ but $(v,w)\in E^{-}$, and \textbf{(2)} it has not been picked by a positive mistake $(u,v')\in C_i\times C_k$ for some $i\neq k$ or by a positive mistake $(u',v)\in C_i\times C_j$ in the previous steps. We associate $(u,v)$ with the bad triangle $T_{uvw}=(u,v,w).$ Observe that the second condition enforces edge-disjointness on the triangles chosen by the procedure. We now show that such $w\in C_i$, in other words, such a bad triangle $T_{uvw}$ always exists.

Since $C_i$ is $\delta-$clean we have $\vert N^{+}(u)\cap C_i\vert \geq (1-\delta)\vert C_i\vert$. Similarly, since $C_j$ is $\delta-$clean we have $\vert N^{+}(v)\cap C_i\vert \leq \vert N^{+}(v)\cap \overline{C_j}\vert \leq \delta\vert C_j\vert\leq \delta \vert C_i \vert$. Thus, there exist at least $(1-2\delta)\vert C_i\vert$ many vertices $w\in C_i$ such that $(u,w)\in E^{+}$ and $(v,w)\in E^{-}.$ For the same reasons, at most $\delta (\vert C_i\vert+\vert C_j\vert)-2\leq 2\delta \vert C_i\vert-2$ many of these vertices might have been chosen by positive mistakes $(u,v')$ or $(u',v)$ before. Therefore, there are at least $(1-4\delta)\vert C_i \vert + 2$ many vertices available for $(u,v)$. This number is always positive since $\delta\leq \frac{1}{4}$.

Now we bound the number of bad triangles sharing a certain edge. Observe that a positive edge $(u,w)\in C_i\times C_i$ can appear in at most two bad triangles chosen by the procedure, one due to a positive mistake $(u,v)\in C_i\times C_j$ and second possibly due to a positive mistake $(w,x)\in C_i\times C_k$ for some $i\neq k$. Furthermore, a negative edge $(w,v)\in C_i\times C_j$ can appear in at most one bad triangle since the procedure picks vertices from $C_i$ and $v\in C_j$. Lastly, a negative edge in disagreement does not appear in any bad triangle. Thus, assigning $r_{uvw}=\frac{1}{2}$ for bad triangles $T_{uvw}$ chosen by the procedure and $r_{uvw}=0$ for bad triangles not chosen by the procedure we satisfy~(\ref{fractionalPacking}). This gives us a fractional packing of bad triangles. Since for each positive mistake a bad triangle is picked we have $\sum\limits r_{uvw}\geq \frac{1}{2}m^{+}_{\mathcal{C}}.$

Now depending on whether there are more negative mistakes or more positive mistakes, we can choose the corresponding fractional packing which gives the following
\begin{align*}
    m_{\mathcal{C}}=m_{\mathcal{C}}^-+m_{\mathcal{C}}^+\leq 2\max \{m_{\mathcal{C}}^-,m_{\mathcal{C}}^+\}\leq 4\sum\limits r_{uvw}.
\end{align*}
This finalizes the proof.
\end{proof}

This combined with Lemma~\ref{fractionalPacking_Lemma_BBC} give the following Corollary.

\begin{corollary}
The number of mistakes of a clustering in which all clusters are $\delta$-clean is at most $4m_{\OPT}.$
\end{corollary}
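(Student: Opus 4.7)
The plan is essentially to chain the two lemmas that have just been proved, so the proof should be very short. First I would fix an arbitrary clustering $\mathcal{C}$ all of whose clusters are $\delta$-clean, with $\delta \leq \tfrac{1}{4}$ as required. By Lemma~\ref{clean_lemma_BBC}, there exists a fractional packing $\{r_T\}_{T \in \mathcal{T}}$ of bad triangles such that
\begin{align*}
m_{\mathcal{C}} \leq 4 \sum_{T \in \mathcal{T}} r_T.
\end{align*}

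Next I would apply Lemma~\ref{fractionalPacking_Lemma_BBC} to this particular fractional packing, which gives $\sum_{T \in \mathcal{T}} r_T \leq m_{\OPT}$. Substituting this into the previous inequality yields $m_{\mathcal{C}} \leq 4 m_{\OPT}$, which is exactly the claim. There is really no obstacle here; the only thing to be careful about is making sure the hypothesis $\delta \leq \tfrac{1}{4}$ is carried over from the corollary's statement into the invocation of Lemma~\ref{clean_lemma_BBC}, and that the fractional packing produced in that lemma is the very same one whose total weight we bound via Lemma~\ref{fractionalPacking_Lemma_BBC} (rather than, say, a different packing that one might be tempted to construct). Since the conclusion holds for the specific packing constructed in the proof of Lemma~\ref{clean_lemma_BBC}, the chain of inequalities is immediate.
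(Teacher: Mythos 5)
Your proof is correct and matches the paper exactly: the corollary is obtained there, too, by chaining Lemma~\ref{clean_lemma_BBC} (which yields a fractional packing with $m_{\mathcal{C}} \leq 4\sum_{T\in\mathcal{T}} r_T$) with Lemma~\ref{fractionalPacking_Lemma_BBC} (which bounds $\sum_{T\in\mathcal{T}} r_T \leq m_{\OPT}$). Your remark about carrying the hypothesis $\delta \leq \tfrac{1}{4}$ into the invocation of Lemma~\ref{clean_lemma_BBC} is a fair point of care, since the corollary's statement leaves it implicit.
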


Next we show the existence of a clean clustering by manipulating an optimal clustering. Loosely speaking, we keep the clusters in an optimal clustering which contain small number of bad vertices. Furthermore, we split clusters having large number of bad vertices into singleton clusters. Then we show that the number of mistakes of the new clustering is close to $m_{\OPT}.$ We give a pseudo-code for this algorithm in Algorithm~\ref{algorithm_clean_up_BBC}.
\begin{lemma}\label{clean_exists_lemma_BBC}
There exists a clustering $\OPT'$ in which each non-singleton cluster is $\delta-$clean, and $m_{\OPT'}\leq (\frac{9}{\delta^2}+1)m_{\OPT}.$
\end{lemma}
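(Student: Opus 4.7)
The plan is to build $\OPT'$ by a cluster-by-cluster clean-up of $\OPT$. Fix a stricter threshold $\delta_0 \in (0,\delta)$ (concretely $\delta_0 = \delta/2$) and, for every cluster $C^* \in \OPT$, let $B(C^*) \subseteq C^*$ be the set of vertices that are $\delta_0$-bad with respect to $C^*$. Choose a dissolution cutoff $\alpha = \Theta(\delta)$, concretely $\alpha = \delta/3$. If $|B(C^*)| \geq \alpha |C^*|$ (Case A) I dissolve $C^*$ into $|C^*|$ singleton clusters; otherwise (Case B) I keep $C^* \setminus B(C^*)$ as one non-singleton cluster and make every $v \in B(C^*)$ its own singleton.

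I would first verify that in Case B the surviving cluster $C := C^* \setminus B(C^*)$ is $\delta$-clean. Fix any $v \in C$; since $v$ is $\delta_0$-good with respect to $C^*$, a simple set calculation gives
\begin{align*}
|N^+(v) \cap C| &\geq (1-\delta_0)|C^*| - |B(C^*)|, \\
|N^+(v) \cap (V \setminus C)| &\leq \delta_0 |C^*| + |B(C^*)|.
\end{align*}
Substituting $|B(C^*)| < \alpha |C^*|$, the chosen values of $\delta_0$ and $\alpha$, and $\delta \leq \tfrac{1}{4}$ yields both $\delta$-goodness inequalities of $v$ against $|C|$. Singleton clusters are excluded from the cleanness requirement by the ``non-singleton'' qualifier in the statement.

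The cost analysis rests on a single charging observation: every $\delta_0$-bad vertex $v \in C^*$ is incident to more than $\delta_0 |C^*|$ $\OPT$-mistakes. Indeed, $v$ either has more than $\delta_0 |C^*|$ positive edges leaving $C^*$ (each a positive mistake in $\OPT$) or fewer than $(1-\delta_0)|C^*|$ positive neighbors inside $C^*$, and hence more than $\delta_0 |C^*|$ negative edges inside $C^*$ at $v$ (each a negative mistake in $\OPT$). Summing this count over all bad vertices across all clusters and using that each $\OPT$-mistake has at most two endpoints among the bad vertices, I obtain $\sum_{C^*}|B(C^*)|\,|C^*| \leq (2/\delta_0)\,m_{\OPT}$. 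For the new mistakes in $\OPT'$, each dissolved $C^*$ contributes at most $\binom{|C^*|}{2} \leq |C^*|^2/2 \leq |B(C^*)|\,|C^*|/(2\alpha)$ newly mistaken positive edges, while each purified $C^*$ contributes at most $|B(C^*)|\,|C^*|$ newly mistaken positive edges between $B(C^*)$ and $C^* \setminus B(C^*)$. Plugging the charging bound into both sums and absorbing the resulting $O(1/\delta)$ term into the $O(1/\delta^2)$ term using $\delta \leq \tfrac{1}{4}$ (explicitly, $4/\delta \leq 3/\delta^2$) gives total new mistakes at most $(9/\delta^2)\,m_{\OPT}$, so that $m_{\OPT'} \leq (9/\delta^2 + 1)\,m_{\OPT}$.

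The main obstacle I anticipate is the joint calibration of $\delta_0$ and $\alpha$: $\alpha$ must be small enough for the two cleanness inequalities to hold under the slack $\delta \leq \tfrac14$ (which forces $\alpha = O(\delta)$), yet large enough that the Case-A bound $1/(2\alpha\delta_0)$ is still $O(1/\delta^2)$. Nailing the exact constant $9$ requires that slack to cleanly absorb the $O(1/\delta)$ Case-B contribution into the leading Case-A term; with $\delta_0 = \delta/2$ and $\alpha = \delta/3$ this just works.
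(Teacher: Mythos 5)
Your proposal is correct and follows essentially the same route as the paper's proof: the paper's $\delta$-Clean-Up procedure likewise marks vertices that are $\tfrac{\delta}{3}$-bad, dissolves clusters in which they form a $\tfrac{\delta}{3}$-fraction, peels them off otherwise, and charges the new positive mistakes to the $\OPT$-mistakes incident to bad vertices. Your only deviations are cosmetic — a badness threshold of $\delta/2$ instead of $\delta/3$ and a global (rather than per-cluster) accounting of $\sum |B(C^*)|\,|C^*|$ — and both calibrations yield the stated $\bigl(\tfrac{9}{\delta^2}+1\bigr)$ bound in the regime $\delta\leq\tfrac14$ where the lemma is applied.
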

\begin{proof}
Let $\OPT=\{C_1,\dots C_k\}$ be an optimal clustering. Let $\OPT'$ be a clustering obtained by applying Algorithm~\ref{algorithm_clean_up_BBC} to $\OPT$.
\begin{algorithm}
   \caption{Procedure $\delta$-Clean-Up:}
   \label{algorithm_clean_up_BBC}
\begin{algorithmic}[1]
   \STATE Set $S=\emptyset.$
   \STATE Let $B_i$ be the set of $\frac{\delta}{3}-bad$ vertices in $C_i$ for all $i\in [k]$.
   \FOR{$i=1,\dots,k$}
   \IF{$\vert  B_i \vert\geq\frac{\delta}{3}\vert C_i\vert$}
   \STATE $S=S\cup C_i$, and set $C'_i=\emptyset.$ We call this ``dissolving" the cluster.
   \ELSE
   \STATE $S=S\cup B_i$ and set $C'_i=C_i\setminus B_i$.
   \ENDIF
   \ENDFOR
   \RETURN the clustering $\OPT'=\{C'_1,\dots, C'_k,\{x\}_{x\in S}\}.$
\end{algorithmic}
\end{algorithm}

We prove that $m_{\OPT}$ and $m_{\OPT'}$ are comparable. Firstly, we show that $C'_i$ is $\delta-$clean for all $i\in [k]$. Clearly, this holds if $C'_i=\emptyset.$ Otherwise for each $v\in C'_i,$ we have:
\begin{align*}
    \vert N^+(v)\cap C'_i \vert\geq (1-\frac{\delta}{3})\vert C_i\vert -\frac{\delta}{3}\vert C_i \vert= (1-\frac{2\delta}{3})\vert C_i \vert >(1-\delta)\vert C'_i \vert.
\end{align*}
The first inequality follows from the fact that $v$ is $\frac{\delta}{3}-$good with respect to $C_i$, therefore $\vert N^{+}(v)\cap C_i\vert\geq (1-\frac{\delta}{3})\vert C_i\vert$ and at most $\vert B_i\vert\leq \frac{\delta}{3}\vert C_i\vert$ of these neighbors can be $\frac{\delta}{3}-$bad with respect to $C_i$. The second inequality follows from $\vert C_i\vert\geq \vert C'_i\vert$.

Similarly, since $v$ is $\frac{\delta}{3}-$good with respect to $C_i$ we have $\vert N^{+}(v)\cap \overline{C_i}\vert\leq \frac{\delta}{3}\vert C_i\vert$. This together with $\vert C_i\setminus C'_i \vert\leq\frac{\delta}{3} \vert C_i\vert$ give 
\begin{align*}
    \vert N^+(v)\cap \overline{C'_i} \vert\leq \frac{\delta}{3}\vert C_i\vert +\frac{\delta}{3}\vert C_i \vert\leq \frac{2\delta}{3} \frac{\vert C_i\vert}{(1-\delta/3)}  <\delta\vert C'_i \vert.
\end{align*}
where the last two inequalities follow from $\vert C'_i \vert\geq (1-\frac{\delta}{3}) \vert C_i\vert$ and $\delta <1.$ Thus $C'_i$ is $\delta-$clean.

We now upper bound the number of mistakes in $\OPT'$. Observe that if $C_i$ is dissolved then the number of mistakes with an endpoint in $C_i$ must be at least $\frac{1}{2}\frac{\delta^2\vert C_i\vert^2}{9}$ since $\vert B_i \vert\geq \frac{\delta}{3}\vert C_i\vert.$ Furthermore, the number of mistakes added due to dissolving $C_i$ is at most $\frac{\vert C_i \vert^2}{2}$.

In case $C_i$ is not dissolved, the number of mistakes with an endpoint in $C_i$ must be at least $ \frac{\delta\vert C_i\vert \vert B_i \vert}{2\cdot 3}.$ However, the number of mistakes added is at most $\vert B_i \vert \vert C_i \vert$. Since $\frac{6}{\delta}<\frac{9}{\delta^2}$ the extra cost of ``cleaning" $\OPT$ is at most $\frac{9}{\delta^2}m_{\OPT}$ and the lemma follows. 
\end{proof}
For the clustering $\OPT'$ given by Algorithm~\ref{algorithm_clean_up_BBC}, we use $C'_i$ to denote the non-singleton clusters and $S$ to denote the set of singleton clusters throughout the rest of the proof. We now give an algorithm to find a clean clustering whose cost is comparable to the cost of $\OPT'.$ Loosely speaking, the algorithm iteratively picks an arbitrary pivot and considers the set of active vertices in the positive neighborhood of the pivot. In the first phase it removes $3\delta-$bad vertices from the set one by one. In the second phase it adds $7\delta-$good vertices to the set all at once. We give a pseudo-code for this algorithm in Algorithm~\ref{algorithm_cautious_BBC}. We assume $\delta=\frac{1}{44}$.
\begin{algorithm}
   \caption{Procedure Cautious}
   \label{algorithm_cautious_BBC}
\begin{algorithmic}[1]
   \STATE Set $V_{active}=V$, $Z=\emptyset$ and $\mathcal{I}=\emptyset.$
   \WHILE{$V_{active}\neq Z$}
   \STATE Pick an arbitrary vertex $v\in V_{active}$.
   \STATE Set $A(v)=N^+(v)\cap V_{active}.$
   \STATE (\textbf{Vertex Removal Step})
   \WHILE{$\exists x\in A(v)$ such that $x$ is $3\delta-$bad with respect to $A(v)$}
   \STATE $A(v)=A(v)\setminus \{x\}$
   \ENDWHILE
   \STATE (\textbf{Vertex Addition Step})
   \STATE Let $Y=\{y\in V_{active}\vert\; y\mbox{ is } 7\delta-$\mbox{ good with respect to }A(v)$\}$.
   \STATE Set $A(v)=A(v)\cup Y$
   \IF {$A(v)=\emptyset$}
   \STATE Set $Z=Z\cup \{v\}$ 
   \ELSE
   \STATE $V_{active}=V_{active}\setminus A(v)$ and set $\mathcal{I}=\mathcal{I}\cup \{v\}.$
   \ENDIF
   \ENDWHILE
   \RETURN the clustering $\mathcal{A}=\bigcup\limits_{v\in \mathcal{I}}\{A(v)\} \cup \bigcup\limits_{v\in Z}\{\{v\}\}.$
\end{algorithmic}
\end{algorithm}

Observe that in the vertex addition step, all vertices are added in one step as opposed to in the vertex removal step. Furthermore, it is possible that $A(v)$ becomes empty at the end of vertex removal step. For instance, if a vertex $v$ is  picked such that $\vert N^+(v)\vert$ is bounded by some constant but for all $w\in N^+(v)\setminus \{v\}$, $\vert N^+{w}\vert=\Theta(n)$ then $N^{+}(v)$ cannot survive at the end of the vertex removal step, thus $v$ is added to $Z.$ It is also possible that although $A(v)=\emptyset$ at the end of the vertex removal step, $Y\neq \emptyset$ so $A(v)$ becomes non-empty at the end of the vertex addition step and $v$ is not added to $Z.$ This might happen if there is a vertex which is incident to only negative edges.

Let $\mathcal{A}$  denote the clustering and $A_i$ denote the clusters output by Algorithm~\ref{algorithm_cautious_BBC}. Let $Z$ be the set of singleton clusters created in the final step. In the next theorem we show that $\mathcal{A}$ is clean and closely related to $\OPT'$.
\begin{theorem}\label{main_thm_BBC}
$\forall j,\;\exists i$ such that $C'_j\subseteq A_i.$ Moreover, each $A_i$ is $11\delta-clean.$
\end{theorem}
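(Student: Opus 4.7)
The plan is to argue by induction on iterations of the outer while loop in Procedure Cautious, maintaining the invariant that at the start of each iteration every non-singleton cluster $C'_j$ of $\OPT'$ is either fully contained in some already-committed cluster $A_i$ or fully contained in the current $V_{\text{active}}$. Fix an iteration in which this invariant holds, and let $v$ be the chosen pivot. The analysis splits according to whether $v$ lies in some $C'_j$ or in the singleton set $S$ of $\OPT'$.

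If $v \in C'_j$, then by the $\delta$-cleanliness of $C'_j$ the initial set $A(v) = N^+(v) \cap V_{\text{active}}$ contains at least $(1-\delta)|C'_j|$ vertices of $C'_j$ and at most $\delta|C'_j|$ vertices outside $C'_j$, so both $|A(v)|$ and $|A(v) \cap C'_j|$ stay within a $(1 \pm \delta)$-factor of $|C'_j|$ throughout the removal phase. I would show by direct counting that each $x \in C'_j \cap A(v)$ is $3\delta$-good with respect to the current $A(v)$ at every step, so only foreign vertices can be removed. Then in the addition phase every remaining $y \in C'_j \setminus A(v)$ has almost all of its positive neighborhood inside $C'_j$ and hence inside $A(v)$, so $y$ is $7\delta$-good and is added, yielding $C'_j \subseteq A(v)$. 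If instead $v \in S$, I would establish a dichotomy for each non-singleton $C'_j \subseteq V_{\text{active}}$: either $|N^+(v) \cap C'_j|$ is close to $|C'_j|$, and the previous analysis still applies to force $C'_j \subseteq A(v)$; or $|N^+(v) \cap C'_j|$ is small, in which case each $x \in C'_j \cap A(v)$ has most of its positive neighborhood in $C'_j \setminus A(v)$, making $x$ eventually $3\delta$-bad and removed, while no $y \in C'_j \setminus A(v)$ can become $7\delta$-good with respect to $A(v)$ afterwards, so $A(v) \cap C'_j = \emptyset$ when the iteration ends. In both subcases the invariant is preserved, and the first half of the theorem follows.

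For the $11\delta$-cleanliness of each $A_i = A(v) \cup Y$, note that by the termination of the removal phase every $x \in A(v)$ is $3\delta$-good with respect to $A(v)$, and by construction every $y \in Y$ is $7\delta$-good with respect to $A(v)$. The remaining quantitative input is a bound on $|Y|$ relative to $|A(v)|$. A double-counting of positive edges between $A(v)$ and $Y$ -- summing $|N^+(y) \cap A(v)| \geq (1-7\delta)|A(v)|$ over $y \in Y$ on one side and $|N^+(x) \setminus A(v)| \leq 3\delta|A(v)|$ over $x \in A(v)$ on the other -- yields $|Y| \leq \frac{3\delta}{1-7\delta}|A(v)|$. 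With $\delta = \frac{1}{44}$ this dilation is small enough that, by elementary algebra, the $3\delta$-good and $7\delta$-good bounds with respect to $A(v)$ upgrade to $11\delta$-good bounds with respect to the enlarged set $A_i$.

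I expect the main obstacle to be the second subcase of the pivot analysis, when $v \in S$: one has to choose the threshold separating the ``$v$ close to $C'_j$'' and ``$v$ far from $C'_j$'' regimes carefully, so that the $3\delta$, $7\delta$, and $11\delta$ parameters interact correctly with the $\delta$-cleanliness of $C'_j$ to force the clean dichotomy. All of the bookkeeping reduces to counting positive edges between $A(v)$, $C'_j$, and their complements, but the constants must remain in careful balance throughout.
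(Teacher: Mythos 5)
Your second half (the $11\delta$-cleanliness of each $A_i$) is essentially the paper's own argument: every survivor of the removal phase is $3\delta$-good and every added vertex is $7\delta$-good with respect to the post-removal set $A'_i=A(v)$, and your double count of positive edges leaving $A'_i$ gives $|Y|\leq \frac{3\delta}{1-7\delta}|A'_i|$, which upgrades the bounds to $11\delta$ with respect to $A_i$; that part is fine. The gap is in the first half, exactly at the point you flag as the main obstacle. When the pivot $v\in S$ (and, more generally, for clusters $C'_k$ other than the pivot's own), your dichotomy on $|N^+(v)\cap C'_j|$ does not do the job. In the branch where $|N^+(v)\cap C'_j|$ is close to $|C'_j|$ you claim the pivot-in-$C'_j$ analysis still applies, but that analysis uses both halves of the pivot's $\delta$-goodness: not only $|N^+(v)\cap C'_j|\geq(1-\delta)|C'_j|$ but also $|N^+(v)\cap\overline{C'_j}|\leq\delta|C'_j|$, which bounds $|A(v)|\leq(1+\delta)|C'_j|$ and is what makes each $w\in C'_j$ satisfy $|N^+(w)\cap A(v)|\geq(1-3\delta)|A(v)|$. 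A pivot in $S$ carries no such guarantee: $N^+(v)$ may contain most of $C'_j$ together with a much larger foreign set, in which case vertices of $C'_j$ fail the $3\delta$-goodness test and may be removed — possibly only some of them, which is precisely the splitting you must exclude. In the other branch, asserting that every $x\in C'_j\cap A(v)$ ``eventually becomes $3\delta$-bad'' requires controlling the adaptive removal dynamics, and nothing in your sketch prevents $A(v)$ from shrinking onto a proper subset of $C'_j$ all of whose members are $3\delta$-good and survive, again splitting $C'_j$.

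The paper avoids reasoning about the removal trajectory in this situation altogether. Its key tool is the lemma that for an \emph{arbitrary} set $X$, vertices from two distinct $\delta$-clean clusters cannot both be $3\delta$-good with respect to $X$ (Corollary~\ref{coroll_BBC}); applied to the final set $A'_i$, all of whose members are $3\delta$-good by the termination condition of the removal loop, it shows $A'_i$ meets at most one cluster. If it meets $C'_j$ through some $u$, then $|A'_i\cap C'_j|\geq(1-5\delta)|A'_i|$, every missing $w\in C'_j$ is $7\delta$-good and gets added, and a numerical contradiction at $\delta=\tfrac{1}{44}$ shows no vertex of a different cluster can pass the $7\delta$-goodness test; if instead $A'_i\subseteq S$, a similar pair of incompatible inequalities shows no cluster vertex is added at all (Claim~\ref{claim_1_BBC}). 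Existence of an $i$ with $C'_j\subseteq A_i$ is then supplied by the argument you do have, for a pivot inside $C'_j$ (Lemma~\ref{main_lemma_BBC} and Claim~\ref{claim_2_BBC}). Without an analogue of this ``no set has $3\delta$-good vertices from two clusters'' lemma, or some other device letting you argue from the post-removal set alone rather than from the removal process, your case analysis does not close.
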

In order to prove this theorem, we need the following two lemmas.
\begin{lemma}\label{main_lemma_BBC}
If $v\in C'_i,$ where $C'_i$ is a $\delta-$clean cluster in $\OPT'$, then, any vertex $w\in C'_i$ is $3\delta-$good with respect to $N^+(v).$
\end{lemma}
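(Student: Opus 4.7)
The plan is to exploit the fact that both $v$ and $w$ are $\delta$-good with respect to $C'_i$, so each of $N^+(v)$ and $N^+(w)$ is a small symmetric-difference perturbation of $C'_i$; hence they are close to each other, and a suitable choice of constants will make $w$ look $3\delta$-good with respect to $N^+(v)$.

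First I would translate $\delta$-cleanness of $C'_i$ into two bounds for $v$: $|N^+(v)\cap C'_i|\geq (1-\delta)|C'_i|$ and $|N^+(v)\setminus C'_i|\leq \delta|C'_i|$, and similarly for $w$. From these I would extract the key size estimate $|N^+(v)|\geq (1-\delta)|C'_i|$, so that $|C'_i|\leq |N^+(v)|/(1-\delta)$; this is what will let me convert bounds in terms of $|C'_i|$ into bounds in terms of $|N^+(v)|$, which is what the $3\delta$-good definition requires.

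Next I would control the two ``bad'' directions one at a time. For $|N^+(v)\setminus N^+(w)|$, I would split along membership in $C'_i$: the part inside $C'_i$ lies in $C'_i\setminus N^+(w)$ and so has size at most $\delta|C'_i|$ (since $w$ is $\delta$-good with respect to $C'_i$), while the part outside $C'_i$ lies in $N^+(v)\setminus C'_i$ and so has size at most $\delta|C'_i|$ (since $v$ is $\delta$-good with respect to $C'_i$). Summing gives $|N^+(v)\setminus N^+(w)|\leq 2\delta|C'_i|$. A fully symmetric argument with the roles of $v$ and $w$ swapped yields $|N^+(w)\setminus N^+(v)|\leq 2\delta|C'_i|$.

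Finally I would combine these with the size estimate from the first step to obtain $|N^+(w)\setminus N^+(v)|\leq \tfrac{2\delta}{1-\delta}|N^+(v)|$ and $|N^+(v)\cap N^+(w)|\geq \bigl(1-\tfrac{2\delta}{1-\delta}\bigr)|N^+(v)|$, and then verify the inequality $\tfrac{2\delta}{1-\delta}\leq 3\delta$, which holds whenever $\delta\leq 1/3$ (in particular for the working choice $\delta=1/44$). This yields both conditions of the $3\delta$-good definition for $w$ with respect to $N^+(v)$. The only subtle point, which I view as the main (small) obstacle, is making sure the normalization is by $|N^+(v)|$ rather than $|C'_i|$ — this is why it is crucial to use $|N^+(v)|\geq (1-\delta)|C'_i|$ to pass between the two scales without losing the constant.
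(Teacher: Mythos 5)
Your proposal is correct and follows essentially the same route as the paper: use $\delta$-cleanness of $C'_i$ for both $v$ and $w$ to bound the overlap and the difference of $N^+(v)$ and $N^+(w)$ by $(1-2\delta)|C'_i|$ and $2\delta|C'_i|$ respectively, then rescale from $|C'_i|$ to $|N^+(v)|$ via $|N^+(v)|\geq(1-\delta)|C'_i|$. The only cosmetic difference is that the paper rescales the intersection bound using $|N^+(v)|\leq(1+\delta)|C'_i|$, while you subtract the difference bound and use $\tfrac{2\delta}{1-\delta}\leq 3\delta$ (harmless, since $\delta=\tfrac{1}{44}\leq\tfrac13$).
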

\begin{proof}
Since $C'_i$ is $\delta-$clean and $v,w\in C'_i$ we have
\begin{align*}
    \vert N^+(v)\cap N^+(w) \vert\geq (1-2\delta) \vert C'_i \vert\geq (1-3\delta)\vert N^+(v)\vert
\end{align*}
where the first inequality follows from $\vert N^+(v)\cap C'_i \vert\geq (1-\delta)\vert C'_i\vert$ and $\vert N^+(w)\cap C'_i \vert\geq (1-\delta)\vert C'_i\vert$. The second inequality follows from $\vert N^+(v)\vert \leq (1+\delta) \vert C'_i\vert.$ Similarly,
\begin{align*}
    \vert N^+(v)\cap \overline{N^+(w)} \vert &= \vert N^+(v)\cap \overline{N^+(w)}\cap C'_i\vert +\vert N^+(v)\cap \overline{N^+(w)}\cap \overline{C'_i} \vert\\
    &\leq  2\delta\vert C'_i\vert\leq \frac{2\delta}{1-\delta}\vert N^+(v) \vert\leq 3\delta \vert N^+(v) \vert 
\end{align*}
where the first inequality follows from $\vert N^+(v)\cap \overline{C'_i} \vert\leq \delta \vert C'_i\vert$ and $\vert \overline{N^+(w)}\cap C'_i\vert\leq \delta \vert C'_i\vert.$ The second inequality follows from $(1-\delta)\vert C'_i\vert\leq \vert N^{+}(v)\vert$. 
Thus, $w$ is $3\delta-$good with respect to $N^+(v).$
\end{proof}
\begin{lemma}
Given an arbitrary set $X$, if $v_1\in C'_i$ and $v_2\in C'_j$, then $v_1$ and $v_2$ cannot both be $3\delta-$good with respect to $X$.
\end{lemma}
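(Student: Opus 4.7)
My plan is to proceed by contradiction, treating this as a statement about distinct clusters (i.e., $i\neq j$, since the previous lemma already shows that two vertices in a common clean cluster can simultaneously be $3\delta$-good with respect to $X=N^+(v)$ for some $v$ in that cluster). Assume $v_1\in C'_i$ and $v_2\in C'_j$ with $i\neq j$ are both $3\delta$-good with respect to some set $X$. The strategy is to sandwich $|N^+(v_1)\cap N^+(v_2)|$: $3\delta$-goodness forces this common neighborhood to be close to $|X|$, while the $\delta$-cleanness of the two distinct clusters forces it to be much smaller than $|X|$, so for $\delta=\tfrac{1}{44}$ these two forces collide.

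For the lower bound, $3\delta$-goodness yields $|N^+(v_\ell)\cap X|\geq (1-3\delta)|X|$ for $\ell=1,2$, and inclusion-exclusion inside $X$ gives
\[
|N^+(v_1)\cap N^+(v_2)|\;\geq\;|N^+(v_1)\cap N^+(v_2)\cap X|\;\geq\;(1-6\delta)|X|.
\]

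For the upper bound I would partition the common neighborhood according to membership in $C'_j$. A common neighbor inside $C'_j$ is a positive neighbor of $v_1$ lying outside $C'_i$, of which the $\delta$-cleanness of $C'_i$ permits at most $\delta|C'_i|$; a common neighbor outside $C'_j$ is a positive neighbor of $v_2$ lying outside $C'_j$, of which the $\delta$-cleanness of $C'_j$ permits at most $\delta|C'_j|$. Hence $|N^+(v_1)\cap N^+(v_2)|\leq \delta|C'_i|+\delta|C'_j|$.

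To combine the two bounds I would translate cluster sizes into $|X|$. For $\ell=1,2$, $3\delta$-goodness of $v_\ell$ gives $|N^+(v_\ell)|\leq (1+3\delta)|X|$ (at most $3\delta|X|$ positive neighbors lie outside $X$ and trivially at most $|X|$ lie inside), while $\delta$-cleanness gives $|N^+(v_1)|\geq (1-\delta)|C'_i|$ and $|N^+(v_2)|\geq (1-\delta)|C'_j|$, so both $|C'_i|,|C'_j|\leq \tfrac{1+3\delta}{1-\delta}|X|$. Substituting and cancelling $|X|$ reduces everything to $(1-6\delta)(1-\delta)\leq 2\delta(1+3\delta)$, which simplifies to $\delta\geq \tfrac{1}{9}$, contradicting $\delta=\tfrac{1}{44}$. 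The only subtle step is choosing the partition in the upper bound so that $\delta$-cleanness of the appropriate cluster applies on each side; once that split is made, everything else is routine arithmetic.
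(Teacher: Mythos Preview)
Your proposal is correct and follows essentially the same argument as the paper: both proceed by contradiction, obtain the lower bound $|N^+(v_1)\cap N^+(v_2)|\geq (1-6\delta)|X|$ from $3\delta$-goodness, the upper bound $\delta(|C'_i|+|C'_j|)$ from $\delta$-cleanness and disjointness of $C'_i,C'_j$, then convert cluster sizes to $|X|$ via $|C'_\ell|\leq \tfrac{1+3\delta}{1-\delta}|X|$ and reach the numerical contradiction for $\delta<\tfrac{1}{9}$. The only cosmetic difference is that the paper bounds $|C'_i|$ in terms of $|X|$ directly, whereas you route through $|N^+(v_\ell)|\leq (1+3\delta)|X|$ and $|N^+(v_\ell)|\geq (1-\delta)|C'_\ell|$; the two computations yield the same estimate.
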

\begin{proof}
Suppose that $v_1,v_2$ are both $3\delta-$good with respect to $X.$ Then, $\vert N^+(v_1)\cap X \vert\geq (1-3\delta)\vert X\vert$ and $\vert N^+(v_2)\cap X \vert\geq (1-3\delta)\vert X\vert$. Hence, $\vert N^+(v_1)\cap  N^+(v_2) \cap X \vert\geq (1-6\delta)\vert X\vert$, which implies that
\begin{align}\label{small_equation_1_BBC}
    \vert N^+(v_1)\cap  N^+(v_2) \vert\geq (1-6\delta)\vert X\vert.
\end{align}

Also, since $C'_i,C'_j$ are both $\delta-$clean we have $\vert N^+(v_1)\cap \overline{C'_i}\vert\leq \delta \vert C'_i \vert,$ $\vert N^+(v_2)\cap \overline{C'_j}\vert\leq \delta \vert C'_j \vert$. These combined with $C'_i\cap C'_j=\emptyset$ imply that
\begin{align}\label{small_equation_2_BBC}
    \vert N^+(v_1)\cap N^+(v_2)\vert\leq \delta (\vert C'_i \vert+\vert C'_j \vert).
\end{align}
Observe that
\begin{align*}
 \vert C'_i \vert &=\vert N^+(v_1)\cap C'_i \vert+ \vert \overline{N^+(v_1)}\cap C'_i \vert\leq \vert N^+(v_1)\cap C'_i \vert+\delta \vert C'_i \vert\\
 &=  \vert N^+(v_1)\cap X\cap C'_i \vert + \vert N^+(v_1)\cap \overline{X} \cap C'_i \vert+\delta \vert C'_i \vert \\
 &\leq \vert N^+(v_1)\cap X \cap C'_i \vert+ 3\delta \vert X\vert + \delta \vert C'_i \vert\leq (1+3\delta) \vert X \vert +\delta \vert C'_i \vert
\end{align*}
where the second inequality follows from $v_1$ being $3\delta-$good with respect to $X.$ Therefore we have $\vert C'_i \vert\leq \frac{1+3\delta}{1-\delta}\vert X\vert$. Using similar arguments we get $\vert C'_j \vert\leq \frac{1+3\delta}{1-\delta}\vert X\vert.$ These combined with~(\ref{small_equation_2_BBC}) imply
\begin{align}\label{small_equation_3_BBC}
    \vert N^+(v_1)\cap  N^+(v_2) \vert\leq 2\delta \frac{1+3\delta}{1-\delta}\vert X\vert.
\end{align}
However, since $\delta<\frac{1}{9}$ we have $ 2\delta (1+3\delta)< (1-6\delta)(1-\delta)$. This combined with~(\ref{small_equation_3_BBC}) contradicts with~(\ref{small_equation_1_BBC}). Thus the statement follows.
\end{proof}

\begin{corollary}\label{coroll_BBC}
After the vertex removal step of Algorithm~\ref{algorithm_cautious_BBC}, no two vertices from distinct $C'_i$ and $C'_j$ can be present in $A(v).$
\end{corollary}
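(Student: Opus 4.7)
The plan is to show this is a direct consequence of the preceding lemma, specialized with $X$ set to the residual $A(v)$ produced by the vertex removal loop. My first step would be to identify the invariant that the loop establishes at termination: the \textbf{while} condition rules out the existence of any $x \in A(v)$ that is $3\delta$-bad with respect to $A(v)$, so upon exiting the vertex removal step, every vertex currently in $A(v)$ is $3\delta$-good with respect to $A(v)$ itself.

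Next I would argue for a contradiction. Suppose, for contradiction, that after the vertex removal step there exist two vertices $v_1, v_2 \in A(v)$ with $v_1 \in C'_i$ and $v_2 \in C'_j$ for distinct indices $i \neq j$ (so in particular neither is a singleton in $\OPT'$, as singletons are not among the $C'_i$). By the invariant from the previous step, both $v_1$ and $v_2$ are $3\delta$-good with respect to the set $X := A(v)$. But this is precisely the configuration forbidden by the preceding lemma, which asserts that no two vertices drawn from distinct $\delta$-clean clusters of $\OPT'$ can simultaneously be $3\delta$-good with respect to a common set $X$.

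The hypothesis $\delta \le 1/44$ assumed for the Cautious procedure comfortably satisfies the $\delta < 1/9$ requirement used inside the previous lemma, so invoking it is legitimate. The contradiction yields the claim, so the only obstacle is really bookkeeping: making sure the invariant is stated for the correct set (the $A(v)$ at the moment the vertex removal step ends, before the vertex addition step enlarges it again), and noting that vertices sitting in singleton clusters of $\OPT'$ are not covered by the statement of the corollary, since the claim only concerns membership in the non-singleton clusters $C'_i$. With those two observations in place, the proof is a one-line application of the previous lemma.
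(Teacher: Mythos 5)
Your proof is correct and is exactly the argument the paper intends: the corollary is stated as an immediate consequence of the preceding lemma, obtained by noting that when the vertex removal loop terminates every vertex remaining in $A(v)$ is $3\delta$-good with respect to $A(v)$, and then applying the lemma with $X = A(v)$ (the check that $\delta = \tfrac{1}{44} < \tfrac{1}{9}$ is a nice touch). No gaps.
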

Now we prove Theorem~\ref{main_thm_BBC}.
\begin{proof}[Proof of Theorem~\ref{main_thm_BBC}:] For a cluster $A_i$, let $A'_i$ be the set produced after the vertex removal step of Algorithm~\ref{algorithm_cautious_BBC} such that the cluster $A_i$ is obtained by applying the vertex addition step to $A'_i$.

We partition the proof of the theorem into three parts. Firstly, we show that each $A_i$ is either a subset of $S$ or contains exactly one of the clusters $C'_j.$ Secondly, we prove that for all $j$ we can find an $i$ such that $C'_j\subseteq A_i$. Finally, we prove that $A_i$ is $11\delta-$clean for all $i.$

\begin{claim}\label{claim_1_BBC}
For each cluster $A_i$ output by Algorithm~\ref{algorithm_cautious_BBC} either (i) $A_i\subseteq S$ or (ii) there exists unique $j$ such that $C'_j\subseteq A_i$. 
\end{claim}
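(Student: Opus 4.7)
The plan is to analyze $A_i$ in two stages by introducing $A'_i$ for the set obtained right after the vertex removal step, so that $A_i$ results from applying the vertex addition step to $A'_i$. By Corollary~\ref{coroll_BBC}, $A'_i$ meets at most one of the non-singleton clusters $C'_1,\dots,C'_k$, which naturally splits the argument into two cases.

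\textbf{Case 1:} $A'_i\cap C'_j\neq\emptyset$ for a unique $j$. I would first show that the pivot $v$ producing $A_i$ must itself lie in $C'_j$: if $v\in C'_k$ with $k\neq j$, then iterating Lemma~\ref{main_lemma_BBC} together with $\delta$-cleanness would let some vertex of $C'_k$ survive the removal step, contradicting Corollary~\ref{coroll_BBC}; and if $v\in S$, the $\delta$-cleanness of $C'_j$ bounds $|A'_i\cap C'_j|$ too low to sustain any vertex of $C'_j$ as $3\delta$-good with respect to the shrinking $A(v)$. Once $v\in C'_j$ is known, Lemma~\ref{main_lemma_BBC} together with a short monotonicity argument implies that no vertex of $C'_j$ is ever removed, so $C'_j\cap V_{active}\subseteq A'_i$. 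The at most $\delta|C'_j|$ vertices of $C'_j\setminus N^+(v)$ are then $7\delta$-good with respect to $A'_i$ and are added during the vertex addition step, giving $C'_j\subseteq A_i$. Uniqueness of $j$ follows because any $w'\in A_i\cap C'_\ell$ with $\ell\neq j$ would have to enter during the addition step as $7\delta$-good with respect to $A'_i$, which violates $\delta$-cleanness of $C'_\ell$ since $A'_i$ is essentially $C'_j$.

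\textbf{Case 2:} $A'_i\subseteq S$. If the vertex addition step introduces no vertex from any $C'_j$, then $A_i\subseteq S$ and (i) holds. Otherwise, some $w\in C'_j$ is added as a $7\delta$-good vertex with respect to $A'_i$; by $\delta$-cleanness of $C'_j$, essentially every other vertex of $C'_j$ then also qualifies as $7\delta$-good and joins $A_i$, so $C'_j\subseteq A_i$. Uniqueness of $j$ here comes from a $7\delta$-analog of the second lemma preceding Corollary~\ref{coroll_BBC}: two vertices from distinct non-singleton clusters cannot both be $7\delta$-good with respect to the same set $X$, the analogous proof going through because $2\delta(1+7\delta)<(1-14\delta)(1-\delta)$ holds at $\delta=\frac{1}{44}$.

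The main obstacle I expect lies in Case 1: ruling out pivots outside $C'_j$ and then proving inductively that no vertex of $C'_j$ is removed during the removal step. The removal step shrinks $A(v)$ one vertex at a time, so one must show that the $3\delta$-good condition for vertices of $C'_j$ is preserved throughout the process, not just at the start as Lemma~\ref{main_lemma_BBC} provides. The remainder of the proof reduces to a handful of $\delta$-cleanness inequalities and the $7\delta$-extension of the second lemma, both of which hold at the chosen value $\delta=\frac{1}{44}$.
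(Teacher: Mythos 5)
Your decomposition into $A'_i$ (after removal) and $A_i$ (after addition) matches the paper, but your handling of the case $A'_i\cap C'_j\neq\emptyset$ has a genuine gap: you reduce everything to the assertion that the pivot $v$ must itself lie in $C'_j$, and your dismissal of the subcase $v\in S$ is unjustified and, as far as I can see, false. Nothing prevents a singleton vertex $v\in S$ whose positive neighborhood essentially coincides with $C'_j$ from being chosen as pivot; then every vertex of $C'_j\cap A(v)$ can remain $3\delta$-good throughout the removal step, so $A'_i$ meets (indeed may contain most of) $C'_j$ even though $v\notin C'_j$. Since your entire argument that $C'_j\subseteq A_i$ runs through Lemma~\ref{main_lemma_BBC} applied to $N^+(v)$ with the pivot $v\in C'_j$, this subcase is left unhandled. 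The paper's proof never locates the pivot at all: it only uses that any surviving vertex $u\in C'_j\cap A'_i$ is $3\delta$-good with respect to $A'_i$ (because the removal loop terminated), and combines this with $\delta$-cleanness to get $\vert A'_i\cap C'_j\vert\geq(1-5\delta)\vert A'_i\vert$; from that single counting fact both conclusions follow — the leftover vertices of $C'_j$ are $7\delta$-good and are absorbed in the addition step, and no vertex of another $C'_k$ can satisfy the $7\delta$-goodness test. (Your pivot-in-$C'_k$ subcase also quietly imports the inductive "no vertex of the pivot's cluster is ever removed" argument, which is the content of the separate Claim~\ref{claim_2_BBC}; the paper's proof of the present claim does not need it.)

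A secondary issue is your Case 2 ($A'_i\subseteq S$). The paper shows this case forces $A_i\subseteq S$ outright: the two conditions required for a vertex $u\in C'_j$ to be $7\delta$-good with respect to $A'_i$, namely $\delta\vert C'_j\vert\geq(1-7\delta)\vert A'_i\vert$ and $(1-\delta)\vert C'_j\vert\leq 7\delta\vert A'_i\vert$, contradict each other at $\delta=\frac{1}{44}$. You instead entertain the branch in which some $w\in C'_j$ does join and argue that "essentially every other vertex" of $C'_j$ then joins as well; that branch is in fact vacuous, and the loose argument you give for it would not by itself deliver the exact containment $C'_j\subseteq A_i$ (being $7\delta$-good is not inherited exactly by the other vertices of $C'_j$, only approximately). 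Your proposed $7\delta$-analogue of the two-cluster exclusion lemma is fine numerically at $\delta=\frac{1}{44}$, but the core of the claim needs the counting argument on $A'_i$ described above rather than a reduction to the location of the pivot.
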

\begin{proof}
Firstly, we consider the case when $A'_i\subseteq S$. We want to show that during the vertex addition step of Algorithm~\ref{algorithm_cautious_BBC}, no vertex $u\in C'_j$ can enter $A'_i$ for any $j$. For $u$ to enter $A'_i$ one of the necessary conditions is $\vert N^{+}(u)\cap A'_i\vert\geq (1-7\delta)\vert A'_i\vert$. Then, since $\vert N^{+}(u)\cap\overline{C'_i}\vert \leq \delta \vert C'_i\vert$ and $C'_j\cap A'_i=\emptyset$ it follows that
\begin{align}\label{small_equation_4_BBC}
    \delta \vert C'_i\vert\geq (1-7\delta)\vert A'_i\vert
\end{align}

The other necessary condition is $\vert N^{+}(u)\cap \overline{A'_i}\vert\leq 7\delta \vert A'_i \vert.$ Since $\vert N^{+}(u)\cap C'_i\vert\geq (1-\delta)\vert C'_i\vert$ and $C'_j\cap A'_i=\emptyset$ we have
\begin{align}\label{small_equation_5_BBC}
    (1-\delta)\vert C'_j\vert\leq 7\delta\vert A'_i\vert
\end{align}
However since $\delta=\frac{1}{44}$, (\ref{small_equation_4_BBC}) and (\ref{small_equation_5_BBC}) contradict with each other. Thus $A_i\subseteq S.$

Now we consider the second case. In particular, we assume that there exists $u\in C'_j$ for some $j$ such that $u\in A'_i.$ We want to show that (1) $C'_j\subseteq A_i$ and (2) $A_i\cap C'_k=\emptyset$ for all $k\neq j$. First, we show that (2) holds. Observe that no vertex $v\in C'_k$ for $k\neq j$ can appear in $A'_i$ due to Corollary~\ref{coroll_BBC}. For $u\in C'_k$ to enter $A'_i$ during the vertex addition step of Algorithm~\ref{algorithm_cautious_BBC} one of the necessary conditions is $\vert N^{+}(v)\cap A'_i\vert\geq (1-7\delta)\vert A'_i\vert.$ Furthermore, $\vert N^{+}(u)\cap A'_i\vert \geq (1-3\delta)\vert A'_i\vert$ which implies that
\begin{align}\label{small_equation_6_BBC}
    \vert N^{+}(u)\cap N^{+}(v)\vert\geq (1-10\delta)\vert A'_i\vert
\end{align}
Similarly, $\vert N^{+}(v)\cap \overline{C'_k}\vert\leq \delta\vert C'_k\vert$ and $\vert N^{+}(u)\cap \overline{C'_j}\vert\leq \delta\vert C'_j\vert$ hold. Since $C'_j\cap C'_k=\emptyset$ it follows that 
\begin{align}\label{small_equation_7_BBC}
    \vert N^{+}(u)\cap N^{+}(v)\vert\leq \delta(\vert C'_j\vert +\vert C'_k\vert)
\end{align}
Observe that
\begin{align}\label{small_equation_9_BBC}
 (1-\delta)\vert C'_j\vert \leq \vert N^{+}(u)\vert\leq (1+3\delta)\vert A'_i\vert  
\end{align}
and
\begin{align}\label{small_equation_10_BBC}
    (1-\delta)\vert C'_k\vert \leq\vert N^{+}(v)\vert\leq (1+7\delta)\vert A'_i\vert
\end{align}
Then~(\ref{small_equation_9_BBC}), (\ref{small_equation_10_BBC}) together with (\ref{small_equation_7_BBC}) imply that
\begin{align}\label{small_equation_8_BBC}
   \vert N^{+}(u)\cap N^{+}(v)\vert\leq \frac{\delta}{1-\delta}(\vert N^{+}(u)\vert+\vert N^{+}(v)\vert)\leq \frac{\delta}{1-\delta}(2+10\delta)\vert A'_i\vert
\end{align}
Since $\delta=\frac{1}{44}$, (\ref{small_equation_8_BBC}) contradicts with (\ref{small_equation_6_BBC}). Thus we show that $A_i\cap C'_k=\emptyset$ for $k\neq j.$

Now we show that condition (1) holds. That is, we want to prove that $C'_j\subseteq A_i.$ Observe that~(\ref{small_equation_9_BBC}) implies $\vert C'_j\vert< 2\vert A'_i\vert$. Then we have 
\begin{align}\label{small_equation_11_BBC}
    \vert A'_i\cap C'_j\vert\geq \vert A'_i\cap N^+(u)\vert-\vert N^+(u)\cap \overline{C'_j}\vert\geq \vert A'_i\cap N^+(u)\vert-\delta \vert C'_j\vert\geq (1-5\delta)\vert A'_i\vert. 
\end{align}

We now show that all remaining vertices from $C'_j$ enter $A_i$ during the vertex addition step of Algorithm~\ref{algorithm_cautious_BBC}. Consider a vertex $w\in C'_j$ such that $w\not\in A'_i$. Since $C'_j$ is $\delta-$clean we have $\vert\overline{N^+(w)}\cap C'_j\vert\leq \delta \vert C'_j\vert$. Furthermore, (\ref{small_equation_11_BBC}) implies $\vert A'_i\cap \overline{C'_j}\vert\leq 5\delta \vert A'_i\vert$. Then it follows that
\begin{align*}
    \vert A'_i\cap \overline{N^+(w)}\vert\leq 5\delta \vert A'_i\vert + \delta \vert C'_j\vert\leq 7\delta \vert A'_i\vert.
\end{align*}
Similarly we have $\vert \overline{C'_j}\cap N^{+}(w)\vert\leq \delta \vert C'_j\vert<2\delta \vert A'_i\vert$ and $\vert \overline{A'_i}\cap C'_j\vert \leq 5\delta \vert A'_i\vert$ together which imply
\begin{align}
    \vert \overline{A'_i}\cap N^+(w)\vert\leq 7\delta \vert A'_i\vert.
\end{align}

Thus $w$ is $7\delta-$good with respect to $A'_i,$ and will be added in the vertex addition step of Algorithm~\ref{algorithm_cautious_BBC}. The claim follows.
\end{proof}

Now we show that for all $j$ there exists an $i$ such that $C'_j\subseteq A_i$.

\begin{claim}\label{claim_2_BBC}
$\forall j$, $\exists i$ such that $C'_j\subseteq A_i.$
\end{claim}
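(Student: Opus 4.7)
The plan is to argue by contradiction: suppose some $j$ has $C'_j \not\subseteq A_i$ for every $i$. Claim~\ref{claim_1_BBC} then forces $A_i \cap C'_j = \emptyset$ for every $i$, since any $A_i$ meeting $C'_j$ would necessarily contain all of $C'_j$ (it cannot fall under case (i) because vertices of the non-singleton cluster $C'_j$ are not in $S$, and the unique $C'_{j'}$ guaranteed by case (ii) must then be $C'_j$ itself). Consequently, every vertex of $C'_j$ eventually ends up in $Z$ as a singleton, so every vertex of $C'_j$ is chosen as a pivot at some iteration of Algorithm~\ref{algorithm_cautious_BBC} with $A(\cdot) = \emptyset$ at the end of that iteration. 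I will focus on the first such iteration, i.e., the moment at which the very first vertex $v \in C'_j$ is selected as pivot. By minimality, at this moment no vertex of $C'_j$ has yet entered any $A_i$ nor been added to $Z$, hence $C'_j \subseteq V_{active}$, and therefore $A(v) \cap C'_j = N^+(v) \cap C'_j$ initially.

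The heart of the argument is to show that during the vertex removal step of this iteration, no vertex of $C'_j$ is ever removed, so in particular $v$ itself survives. I would prove this by induction on the removal process. As long as no $C'_j$-vertex has been removed, the $\delta$-cleanness of $C'_j$ applied to $v$ keeps the bounds
\begin{align*}
(1-\delta)|C'_j| \;\le\; |A(v) \cap C'_j| \;\le\; |A(v)| \;\le\; |N^+(v)| \;\le\; (1+\delta)|C'_j|
\end{align*}
in force. For any $x \in A(v) \cap C'_j$, I then verify the two inequalities defining $3\delta$-goodness with respect to the current $A(v)$. Inclusion-exclusion inside $C'_j$ gives
\begin{align*}
|N^+(x) \cap A(v)| \;\ge\; |N^+(x) \cap N^+(v) \cap C'_j| \;\ge\; (1-2\delta)|C'_j| \;\ge\; (1-3\delta)(1+\delta)|C'_j| \;\ge\; (1-3\delta)|A(v)|,
\end{align*}
which settles the first inequality. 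Splitting $N^+(x) \cap (V \setminus A(v))$ along $C'_j$ and using $C'_j \setminus A(v) = C'_j \setminus N^+(v)$ together with $\delta$-cleanness applied to both $v$ and $x$, I get
\begin{align*}
|N^+(x) \cap (V \setminus A(v))| \;\le\; |C'_j \setminus N^+(v)| + |N^+(x) \cap \overline{C'_j}| \;\le\; 2\delta|C'_j| \;\le\; 3\delta(1-\delta)|C'_j| \;\le\; 3\delta|A(v)|,
\end{align*}
where the middle step uses $2 \le 3(1-\delta)$, which holds since $\delta = \tfrac{1}{44} \le \tfrac{1}{3}$. Hence $x$ is $3\delta$-good and cannot be selected for removal, closing the induction.

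It follows that $v$ itself survives the vertex removal step, so $A(v)$ is nonempty at the end of the iteration. Algorithm~\ref{algorithm_cautious_BBC} therefore inserts $v$ into some cluster $A_i$ rather than $Z$, which yields $v \in A_i \cap C'_j$, and then Claim~\ref{claim_1_BBC} forces $C'_j \subseteq A_i$, contradicting the initial assumption. The main obstacle is precisely the inductive step on the removal loop: because $A(v)$ shrinks while the goodness threshold is computed against the current $A(v)$, one has to verify both inequalities uniformly for every $x \in A(v) \cap C'_j$ using only the $\delta$-cleanness of $C'_j$; the comfortable slack provided by $\delta = \tfrac{1}{44}$ is exactly what makes the two comparisons close.
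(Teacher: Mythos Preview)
Your proof is correct and follows essentially the same route as the paper: the core of both arguments is the induction showing that every vertex of $C'_j$ remains $3\delta$-good with respect to the shrinking set $A(v)$ throughout the vertex removal step, using the $\delta$-cleanness of $C'_j$ and the same arithmetic (comparing $1-2\delta$ against $(1-3\delta)(1+\delta)$ and $2\delta$ against $3\delta(1-\delta)$). Your contradiction wrapper via Claim~\ref{claim_1_BBC} makes explicit the assumption $C'_j\subseteq V_{active}$ at the moment the pivot $v$ is selected, something the paper leaves implicit; apart from that bookkeeping, the two proofs are the same.
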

\begin{proof}
Let $v\in C'_j$ be a vertex picked by Algorithm~\ref{algorithm_cautious_BBC}. We show that during the vertex removal step, no vertex $w\in N^+(v) \cap C'_j$ is removed from $A(v)$. The proof follows by an induction on the number of vertices removed so far ($r$) in the vertex removal step by Algorithm~\ref{algorithm_cautious_BBC}. The base case ($r = 0$) follows from Lemma~\ref{main_lemma_BBC} since every vertex in $C'_j$ is $3\delta-$good with respect to $N^+(v)$. For the induction step assume $N^+(v) \cap C'_j\subseteq A(v)$ thus far. We show that every vertex $w\in C'_j$ is still $3\delta-$good with respect to the intermediate $A(v)$. Since $C'_j$ is $\delta-$clean, we have $\vert N^{+}(w)\cap C'_j\vert \geq (1-\delta)\vert C'_j\vert$ and $\vert N^{+}(v)\cap C'_j\vert \geq (1-\delta)\vert C'_j\vert$ which imply $\vert N^{+}(w)\cap N^{+}(v)\cap C'_j\vert \geq (1-2\delta)\vert C'_j\vert$. Then
\begin{align*}
    \vert N^{+}(w)\cap A(v)\vert\geq \vert  N^{+}(w)\cap N^{+}(v)\cap C'_j \vert \geq (1-2\delta)\vert C'_j\vert \geq \frac{1-2\delta}{1+\delta}\vert A(v)\vert > (1-3\delta) \vert A(v)\vert 
\end{align*}
where the third inequality follows from $\vert A(v)\vert\leq \vert N^+(v)\vert\leq (1+\delta)\vert C'_j\vert.$ For the same reasons we have $\vert\overline{A(v)}\cap C'_j\vert\leq \delta\vert C'_j\vert$, $\vert N^{+}(w)\cap \overline{C'_j}\vert \leq \delta\vert C'_j\vert$ and $\vert C'_j\vert \leq \frac{1}{1-\delta}\vert A(v)\vert.$ Then it follows that
\begin{align*}
   \vert N^{+}(w)\cap \overline{A(v)}\vert&=\vert N^{+}(w)\cap \overline{A(v)}\cap \overline{C'_j}\vert+\vert N^{+}(w)\cap \overline{A(v)}\cap C'_j\vert\\
   &\leq 2\delta \vert C'_j\vert\leq \frac{2\delta}{1-\delta} \vert A(v)\vert<3\delta \vert A(v)\vert.
\end{align*}

Thus $w\in C'_j$ is $3\delta-$ good with respect to $A(v)$ is not removed which ends the proof of induction. We proved that $A'_i$ contains at least $(1-\delta)\vert C'_j$ vertices of $C'_j$ at the end of the vertex removal step, and hence by Claim~\ref{coroll_BBC}, $C'_j\subseteq A_i$ after the vertex addition step of Algorithm~\ref{algorithm_cautious_BBC}.
\end{proof}

Finally we show that every non-singleton cluster is $11\delta-$clean.

\begin{claim}\label{claim_3_BBC}
Each non-singleton cluster $A_i$ is $11\delta-$clean.
\end{claim}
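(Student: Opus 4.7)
The plan is to decompose $A_i$ as the disjoint union of $C'_j$ and $A_i\cap S$ (valid by Claims~\ref{claim_1_BBC} and~\ref{claim_2_BBC} together with Corollary~\ref{coroll_BBC}), to establish the size bound $\vert A_i\vert \leq (1+O(\delta))\vert A'_i\vert$, and then to verify the two $11\delta$-cleanliness inequalities for every $v\in A_i$ by a case analysis on whether $v$ lies in $C'_j$ or in $A_i\cap S$.

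First I would control the size of $A_i$ relative to $A'_i$. From~(\ref{small_equation_11_BBC}) one has $\vert A'_i\cap C'_j\vert\geq (1-5\delta)\vert A'_i\vert$, and from the proof of Claim~\ref{claim_2_BBC} the only vertices of $C'_j$ missing from $A'_i$ are added back during the vertex addition step, so $\vert C'_j\setminus A'_i\vert\leq \delta\vert C'_j\vert$. It remains to bound the number of vertices of $S$ added during the vertex addition step: each such $y$ is $7\delta$-good with respect to $A'_i$, and combined with the lower bound on $\vert A'_i\cap C'_j\vert$ this gives $\vert N^+(y)\cap C'_j\vert\geq (1-12\delta)\vert A'_i\vert$. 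Double-counting positive edges between these vertices and $C'_j$, using $\vert N^+(x)\cap\overline{C'_j}\vert\leq\delta\vert C'_j\vert$ for $x\in C'_j$, yields an $O(\delta)\vert A'_i\vert$ bound on their number; together with~(\ref{small_equation_9_BBC}) this produces $\vert A_i\vert \leq (1+O(\delta))\vert A'_i\vert$. For $\delta=\tfrac{1}{44}$ the ratio can be made smaller than $\tfrac{1-7\delta}{1-11\delta}$, which is the key quantitative inequality needed below.

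Next I would verify cleanliness by splitting on the location of $v\in A_i$. If $v\in C'_j$, then $\delta$-cleanliness of $C'_j$ directly gives $\vert N^+(v)\cap A_i\vert \geq \vert N^+(v)\cap C'_j\vert \geq (1-\delta)\vert C'_j\vert$ and $\vert N^+(v)\cap\overline{A_i}\vert \leq \vert N^+(v)\cap\overline{C'_j}\vert \leq \delta\vert C'_j\vert$; dividing by $\vert A_i\vert$ and using the size bound yields both $11\delta$-cleanliness inequalities with room to spare. If $v\in A_i\cap S$, then $v$ is either in $A'_i$ (hence $3\delta$-good with respect to $A'_i$, because the vertex removal step terminates only when no $3\delta$-bad vertex remains) or was added in the vertex addition step (hence $7\delta$-good with respect to $A'_i$ by construction). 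In either case $\vert N^+(v)\cap A'_i\vert \geq (1-7\delta)\vert A'_i\vert$ and $\vert N^+(v)\cap\overline{A'_i}\vert \leq 7\delta\vert A'_i\vert$; since $A'_i\subseteq A_i$ (so $\overline{A_i}\subseteq \overline{A'_i}$) and $\vert A_i\vert/\vert A'_i\vert$ is close to $1$, these translate to the desired $11\delta$-bounds on $\vert N^+(v)\cap A_i\vert$ and $\vert N^+(v)\cap\overline{A_i}\vert$.

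The main obstacle will be the double-counting in the size bound step: the argument must trade the $7\delta$-goodness of newly added $S$-vertices against both the $\delta$-cleanliness of $C'_j$ and the lower bound on $\vert A'_i\cap C'_j\vert$ carefully enough that the resulting ratio $\vert A_i\vert/\vert A'_i\vert$ is close to $1$, consuming at most $4\delta$ of extra slack when pushing $3\delta$- and $7\delta$-goodness with respect to $A'_i$ up to $11\delta$-goodness with respect to $A_i$. This is precisely where the concrete choice $\delta=\tfrac{1}{44}$ is used.
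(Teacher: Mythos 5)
Your case analysis for the vertices of $A_i$ and the final ratio computation are fine, but the way you obtain the size bound $\vert A_i\vert\leq(1+O(\delta))\vert A'_i\vert$ has a genuine gap: it only makes sense for clusters of type (ii) in Claim~\ref{claim_1_BBC}. Claim~\ref{claim_1_BBC} explicitly allows non-singleton clusters with $A_i\subseteq S$, and for those there is no $C'_j$ contained in $A_i$, inequality~(\ref{small_equation_11_BBC}) is not available, and your double-counting argument has no $\delta$-clean set $C'_j$ to count against -- so for such clusters you never establish the ratio $\vert A_i\vert/\vert A'_i\vert\leq\frac{1-7\delta}{1-11\delta}$ on which your entire verification for $v\in A_i\cap S$ rests. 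A second, quantitative problem arises even in case (ii): the bound $\vert C'_j\setminus A'_i\vert\leq\delta\vert C'_j\vert$ is proved in Claim~\ref{claim_2_BBC} under the assumption that the pivot lies in $C'_j$, whereas here the pivot may be an $S$-vertex; in general one only gets $\vert C'_j\setminus A'_i\vert\leq\delta\vert C'_j\vert+3\delta\vert A'_i\vert$ (via the $3\delta$-goodness of some $u\in C'_j\cap A'_i$), and with that constant your accounting of added vertices exceeds the threshold $\frac{1-7\delta}{1-11\delta}\approx 1+5.3\delta$ you yourself identify as necessary at $\delta=\frac{1}{44}$.

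The paper's proof sidesteps both issues by never mentioning $C'_j$ at all: every $x\in A'_i$ is $3\delta$-good with respect to $A'_i$ (the removal loop terminates only then), so the total number of positive edges leaving $A'_i$ is at most $3\delta\vert A'_i\vert^2$; every vertex added in the addition step is $7\delta$-good with respect to $A'_i$ and hence consumes at least $(1-7\delta)\vert A'_i\vert$ of these edges, so at most $\frac{3\delta}{1-7\delta}\vert A'_i\vert<4\delta\vert A'_i\vert$ vertices are added, giving $\vert A_i\vert\leq(1+4\delta)\vert A'_i\vert$ uniformly for both cases of Claim~\ref{claim_1_BBC}. Since every $v\in A_i$ is at least $7\delta$-good with respect to $A'_i$ and $A'_i\subseteq A_i$, the two cleanliness bounds follow from $\frac{1-7\delta}{1+4\delta}\geq 1-11\delta$, which holds for every $\delta\geq 0$ without invoking the specific value $\delta=\frac{1}{44}$. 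If you replace your $C'_j$-based counting with this edge-counting bound, the rest of your write-up goes through.
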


\begin{proof}
At the end of vertex removal step, $\forall x\in A'_i$, $x$ is $3\delta-$good with respect to $A'_i$ which implies $\vert N^+(x)\cap \overline{A'_i}\vert\leq 3\delta \vert A'_i\vert.$  So the total number of positive edges leaving $A'_i$ is at most $3\delta \vert A'_i\vert^2.$ Since, in the vertex addition step, we add vertices that are $7\delta-$good with respect to $A'_i$, the number of newly added vertices can be at most $3\delta \vert A'_i\vert^2/(1-7\delta)\vert A'_i\vert< 4\delta \vert A_i \vert$. Thus $\vert A_i\vert \leq (1+4\delta)\vert A'_i\vert.$ Since all vertices $v\in A_i$ are at least $7\delta-$good with respect to $A'_i$ it follows that
\begin{align*}
    \vert N^+(v)\cap A_i\vert\geq (1-7\delta)\vert A'_i\vert\geq \frac{1-7\delta}{1+4\delta}\vert A_i\vert \geq (1-11\delta)\vert A_i\vert.
\end{align*}

Similarly, we have $\vert N^+(v)\cap \overline{A_i}\vert\leq 7\delta\vert A'_i\vert\leq 11\delta \vert A_i\vert.$ And the statement follows.
\end{proof}
Claim~\ref{claim_2_BBC} and Claim~\ref{claim_3_BBC} finalize the proof of Theorem~\ref{main_thm_BBC}.
\end{proof}

Now we bound the number of mistakes of $\mathcal{A}$ output by Algorithm~\ref{algorithm_cautious_BBC} in terms of $\OPT$ and $\OPT'$. Call edges in disagreement that have both end points in some clusters $A_i$ and $A_j$ as internal mistakes and those that have an end point in $Z$ as external mistakes. Similarly in $\OPT'$, call edges in disagreement that have both end points in some clusters $C'_i$ and $C'_j$ as internal mistakes and those having one end point in $S$ as external mistakes. We bound the mistakes in two steps: The following lemma bounds external mistakes.

\begin{lemma}\label{external_mistakes_BBC}
The total number of external mistakes of Algorithm~\ref{algorithm_cautious_BBC} is at most the number of external mistakes of $\OPT'$.
\end{lemma}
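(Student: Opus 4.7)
The plan is to exhibit an injection from external mistakes of $\mathcal{A}$ into external mistakes of $\OPT'$, and the key observation that unlocks everything is that $Z\subseteq S$. Indeed, if $v\in Z$ then $v$ is a singleton in $\mathcal{A}$, so $v$ cannot lie in any non-singleton cluster $A_i$. By Claim~\ref{claim_2_BBC}, every $C'_j$ is contained in some $A_i$, so $v$ cannot lie in any $C'_j$ either; hence $v\in S$. I would lead with this containment as the first step of the proof.

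Next I would observe that external mistakes must be positive edges. An external mistake of $\mathcal{A}$ has an endpoint $u\in Z$, so in $\mathcal{A}$ that vertex forms a singleton cluster. A negative edge incident to $u$ necessarily crosses clusters (its other endpoint is elsewhere) and is therefore in agreement with $\mathcal{A}$, never a mistake. So every external mistake $(u,v)$ of $\mathcal{A}$ is a positive edge with at least one endpoint $u\in Z$ whose two endpoints lie in different clusters of $\mathcal{A}$.

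The final step is to check that such an edge is also an external mistake of $\OPT'$. Since $Z\subseteq S$, the endpoint $u\in Z$ satisfies $u\in S$, so $u$ is a singleton cluster of $\OPT'$. If the other endpoint $v$ lies in some $C'_k$ or in $S\setminus\{u\}$, it sits in a cluster distinct from $\{u\}$, so the positive edge $(u,v)$ is in disagreement with $\OPT'$; and because one of its endpoints is in $S$, it counts as an external mistake of $\OPT'$. Thus the identity map sends each external mistake of $\mathcal{A}$ to a distinct external mistake of $\OPT'$, giving the desired inequality.

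The only subtle point — and what I would be most careful about — is the verification that $Z\subseteq S$; the rest is a short casework argument using the definition of a mistake for singleton clusters. I would make sure the invocation of Claim~\ref{claim_2_BBC} is clearly contrapositive: the claim guarantees that every non-singleton cluster of $\OPT'$ is absorbed into some $A_i$, so any vertex still singleton in $\mathcal{A}$ must already have been singleton in $\OPT'$.
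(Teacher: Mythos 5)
Your proof is correct and takes essentially the same route as the paper: both establish $Z\subseteq S$ from Theorem~\ref{main_thm_BBC} (your use of Claim~\ref{claim_2_BBC} is exactly how the paper gets $Z\cap C'_i=\emptyset$) and then map each external mistake of $\mathcal{A}$, necessarily a positive edge incident to a singleton of $Z\subseteq S$, to the same edge viewed as an external mistake of $\OPT'$. Your explicit remark that negative edges incident to $Z$-singletons can never be mistakes is only implicit in the paper, but the argument is otherwise identical.
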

\begin{proof}
From Theorem~\ref{main_thm_BBC}, it follows that $Z\cap C'_i=\emptyset$ for all $i$. Thus, $Z\subseteq S$. Any external mistake made by Algorithm~\ref{algorithm_cautious_BBC} corresponds to a positive edge in disagreement which is adjacent to some singleton cluster in $Z$. These edges are also in disagreement with $\OPT'$ since they are incident on singleton clusters in $S$. Hence the lemma follows.
\end{proof}

Now consider the internal mistakes of $\mathcal{A}$. Observe that it is sufficient to consider the graph induced by $V'= \bigcup\limits_i A_i$. Furthermore, the cost of the optimal clustering on the graph induced by $V'$ is at most $m_{\OPT}$. Since $11\delta\leq \frac{1}{4}$ we can apply Lemma~\ref{clean_lemma_BBC} to bound the number of internal mistakes.
\begin{lemma}\label{internal_mistakes_BBC}
The total number of internal mistakes of Algorithm~\ref{algorithm_cautious_BBC} is at most  $4m_{\OPT}$.
\end{lemma}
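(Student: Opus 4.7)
The plan is to reduce the counting of internal mistakes to a Correlation Clustering sub-instance to which the two packing lemmas already proved can be applied directly. By definition, an internal mistake has both endpoints in $V'=\bigcup_i A_i$, so the internal mistakes of $\mathcal{A}$ are precisely the mistakes made by the clustering $\{A_i\}$ when viewed as a clustering of the instance induced by $V'$. This lets us forget about $Z$ entirely in what follows.

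First I would invoke Theorem~\ref{main_thm_BBC}, which guarantees that every non-singleton cluster $A_i$ is $11\delta$-clean. The choice $\delta=\frac{1}{44}$ made in Algorithm~\ref{algorithm_cautious_BBC} is tight for exactly this moment: it gives $11\delta=\frac{1}{4}$, which is the hypothesis needed by Lemma~\ref{clean_lemma_BBC}. Applying that lemma to $\{A_i\}$ on the induced sub-instance yields a fractional packing $\{r_T\}$ of bad triangles lying inside $V'$ whose total value satisfies $m_{\mathcal{A}|_{V'}}\leq 4\sum_T r_T$, i.e.\ the number of internal mistakes is at most $4\sum_T r_T$.

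Next I would apply Lemma~\ref{fractionalPacking_Lemma_BBC} to this packing, but for the sub-instance on $V'$. Letting $\OPT(V')$ denote an optimal clustering of that sub-instance, the lemma gives $\sum_T r_T\leq m_{\OPT(V')}$. To compare this with $m_{\OPT}$, I would use the simple monotonicity observation that restricting $\OPT$ to $V'$ is a valid clustering of the sub-instance and its set of mistakes is a subset of the mistakes of $\OPT$ on $V$; therefore $m_{\OPT(V')}\leq m_{\OPT}$. Chaining the three inequalities finishes the proof.

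There is no real obstacle here: once Theorem~\ref{main_thm_BBC} has been established, the argument is a two-line composition of Lemma~\ref{clean_lemma_BBC}, Lemma~\ref{fractionalPacking_Lemma_BBC}, and the monotonicity of $\OPT$ under vertex deletion. The only quantitative point to verify is $11\delta\leq\frac{1}{4}$, which motivated the specific value $\delta=\frac{1}{44}$ fixed in Algorithm~\ref{algorithm_cautious_BBC}.
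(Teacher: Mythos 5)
Your proposal is correct and follows essentially the same route as the paper: the paper also restricts attention to the subgraph induced by $V'=\bigcup_i A_i$, notes that the optimal cost on this induced instance is at most $m_{\OPT}$, and applies Lemma~\ref{clean_lemma_BBC} (together with Lemma~\ref{fractionalPacking_Lemma_BBC}) using the fact that $11\delta\leq\frac{1}{4}$ for $\delta=\frac{1}{44}$. Your write-up merely makes explicit the monotonicity step and the chaining of the two lemmas, which the paper leaves implicit.
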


Summing up results from the Lemma~\ref{external_mistakes_BBC} and~\ref{internal_mistakes_BBC}, and using Lemma~\ref{clean_exists_lemma_BBC}, we get the main theorem of this chapter.
\begin{theorem}
Algorithm~\ref{algorithm_cautious_BBC} gives a $(\frac{9}{\delta^2}+5)-$approximation for $\delta=\frac{1}{44}$.
\end{theorem}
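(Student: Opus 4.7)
The plan is to combine the three lemmas already proved in the chapter into a direct additive bound on $m_{\mathcal{A}}$, where $\mathcal{A}$ is the clustering output by Algorithm~\ref{algorithm_cautious_BBC}. I would split the mistakes of $\mathcal{A}$ into the internal mistakes (both endpoints land in non-singleton output clusters $A_i, A_j$) and the external mistakes (at least one endpoint is a singleton in $Z$), exactly as in the definitions preceding Lemma~\ref{external_mistakes_BBC}, and bound each piece separately against $m_{\OPT}$.

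For the internal mistakes, I would invoke Lemma~\ref{internal_mistakes_BBC} directly: since $\delta = \tfrac{1}{44}$ we have $11\delta \le \tfrac{1}{4}$, so by Claim~\ref{claim_3_BBC} every non-singleton $A_i$ is $11\delta$-clean, and the hypothesis of Lemma~\ref{clean_lemma_BBC} is met on the subgraph induced by $V' = \bigcup_i A_i$. This yields at most $4\, m_{\OPT}$ internal mistakes. For the external mistakes, I would first apply Lemma~\ref{external_mistakes_BBC}, which bounds them by the number of external mistakes of $\OPT'$, and then observe that the external mistakes of $\OPT'$ are trivially at most $m_{\OPT'}$. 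Finally I would invoke Lemma~\ref{clean_exists_lemma_BBC} to get $m_{\OPT'} \le \left(\tfrac{9}{\delta^2}+1\right) m_{\OPT}$.

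Putting the two bounds together gives
\begin{align*}
    m_{\mathcal{A}} \;=\; m_{\mathcal{A}}^{\text{internal}} + m_{\mathcal{A}}^{\text{external}}
    \;\le\; 4\, m_{\OPT} + \left(\tfrac{9}{\delta^2}+1\right) m_{\OPT}
    \;=\; \left(\tfrac{9}{\delta^2}+5\right) m_{\OPT},
\end{align*}
which is the desired approximation guarantee at $\delta = \tfrac{1}{44}$. There is no real obstacle here beyond verifying that the numerical choice $\delta = \tfrac{1}{44}$ is consistent with all the hypotheses used throughout the chapter (in particular $\delta < \tfrac{1}{9}$ for the key lemma on $3\delta$-good vertices, and $11\delta \le \tfrac{1}{4}$ for applying Lemma~\ref{clean_lemma_BBC} to the output clustering); both checks are immediate. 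The theorem is therefore essentially a one-line consequence of the machinery already assembled.
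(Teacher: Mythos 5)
Your proposal is correct and follows essentially the same route as the paper: the paper likewise obtains the theorem by adding the internal-mistake bound of Lemma~\ref{internal_mistakes_BBC} (itself an application of Lemma~\ref{clean_lemma_BBC} with $11\delta\le\frac14$) to the external-mistake bound of Lemma~\ref{external_mistakes_BBC}, and then controlling the latter by $m_{\OPT'}\le(\frac{9}{\delta^2}+1)m_{\OPT}$ via Lemma~\ref{clean_exists_lemma_BBC}, giving $4m_{\OPT}+(\frac{9}{\delta^2}+1)m_{\OPT}=(\frac{9}{\delta^2}+5)m_{\OPT}$. Your added sanity checks on $\delta=\frac{1}{44}$ (namely $\delta<\frac19$ and $11\delta\le\frac14$) are exactly the consistency conditions the paper relies on implicitly.
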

\chapter{Charikar-Guruswami-Wirth Algorithm}\label{CGW}
In this chapter, we describe the $4-$approximation algorithm for the Correlation Clustering problem by~\citet*{CGW03}. It is a considerable improvement compared to $17429$, the approximation guarantee of the algorithm described in Chapter~\ref{BBC}. In addition to the improvement in the approximation factor, \citet{CGW03} introduce a linear programming (LP) relaxation which has now become the dominant approach to attack this problem.

We start with describing the LP relaxation. We first give an integer programming formulation
of the Correlation Clustering problem. For every pair of vertices $u$ and $v$, the integer program (IP)
has a variable $x_{uv}\in \{0,1\}$, which indicates whether $u$ and $v$ belong to the same cluster:

\[
x_{uv} = \begin{dcases*}
        0  & if $u$ and $v$ belong to the same cluster;\\
        1 & otherwise.
        \end{dcases*}
\]
We require that $x_{uv}=x_{vu}$, $x_{uu}=0$
and all $x_{uv}$ satisfy the triangle inequality. That is, $x_{uv} + x_{vw}\geq x_{uw}$.

Every feasible IP solution $x$ defines a clustering $\mathcal{C}=(C_1,\dots,C_T)$ in which
two vertices $u$ and $v$ belong to the same cluster if and only if $x_{uv} = 0$.
A positive edge $(u,v)$ is in disagreement with this clustering if and only if $x_{uv} = 1$;
a negative edge $(u,v)$ is in disagreement with this clustering if and only if $x_{uv} = 0$.
Thus, the cost of the clustering is given by the following linear function:
$$\sum_{(u,v)\in E^+} x_{uv} + \sum_{(u,v)\in E^-} (1 - x_{uv}).$$

We now replace all integrality constraints $x_{uv}\in \{0,1\}$ in the integer program
with linear constraints $x_{uv}\in [0,1]$. The obtained linear program is given in Figure~\ref{fig:LP}.
We refer to each variable $x_{uv}$ as the length of the edge $(u,v)$.
\begin{figure}
\openLP

$$\min \sum_{(u,v)\in E^+} x_{uv} + \sum_{(u,v)\in E^-} (1 - x_{uv}).$$

\noindent\textbf{subject to}
\begin{align*}
x_{uw}&\leq x_{uv}+x_{vw}&\text{for all } u,v,w\in V\\
x_{uv}&=x_{vu}&\text{for all } u,v\in V\\
x_{uu}&=0&\text{for all } u\in V\\
x_{uv}&\in [0,1]&\text{for all } u,v\in V
\end{align*}
\closeLP
\caption{Standard LP relaxation}\label{fig:LP}
\end{figure}

Now we describe the algorithm. It takes an optimal solution of this LP as an input and rounds it to an integral solution. Loosely speaking, it iteratively picks an arbitrary pivot and clusters together the vertices adjacent to the pivot with edges of LP length at most $\frac{1}{2}$. In case the average LP weight of this set is at most $\frac{1}{4}$, the algorithm outputs it as a cluster, otherwise discards it and outputs the pivot as a singleton cluster. We give a pseudo-code for this algorithm in Algorithm~\ref{algo_CGW}.

We start with an informal explanation to give an intuition about the algorithm. Let $u=p_{t}$ be the pivot picked at step $t$ by Algorithm~\ref{algo_CGW}. Let $C_t$ denote the cluster constructed at this step. Consider a negative mistake $(i,j)\in E^-$ such that $i,j\in C_t.$ If $x_{ui}$ and $x_{uj}$ are small then LP weight of $(i,j)$ is high since $1-x_{ij}\geq 1-(x_{ui}+x_{uj}).$ Thus, we can charge the mistake $(i,j)$ to its high LP weight. However if $x_{ui}$ and $x_{uj}$ are close to $\frac{1}{2}$ rather than $0$ then $1-(x_{ui}+x_{uj})$ is small. In this case, $(i,j)$ doesn't have a strong guarantee to ``pay" for itself.

Similarly, consider a positive mistake $(i,j)\in E^+$ such that $i\in C_t$ and $j\not\in C_t.$ If $x_{ui}$ is small then $x_{ij}\geq x_{uj}-x_{ui}\geq \frac{1}{2}-x_{ui}\approx \frac{1}{2}$. Thus, we can charge the mistake $(i,j)$ to its high LP weight. However, in case $x_{ui}$ is close to $\frac{1}{2}$, it is possible that $x_{ij}$ is small. Again $(i,j)$ doesn't have a strong guarantee to ``pay" for itself. In other words, having many vertices in $C_t$ far away from $u$ is problematic. To eliminate this possibility we check whether the ``mass" of the cluster is distributed on its boundary or not in Step 6 of Algorithm~\ref{algo_CGW}.
\begin{algorithm}
   \caption{Rounding Algorithm}
   \label{algo_CGW}
\begin{algorithmic}[1]
   \STATE Let $V_1=V$ be the set of active vertices and $t=1$.
   \WHILE{$V_t\neq \emptyset$}
   \STATE Pick an arbitrary pivot $p_t\in V_t$.
   \STATE Let $S=\{v\in V_t\vert\; x_{p_{t}v}\leq \frac{1}{2}\}\setminus\{p_t\}$.
   \STATE Create a new cluster $C_t$ and add $p_t$ to $C_t$.
   \IF{$\frac{1}{\vert S\vert}\sum\limits_{v\in S}x_{p_{t}v}\leq \frac{1}{4}$}
   \STATE $C_t=C_t\cup S.$
   \ENDIF
   \STATE Let $V_{t+1}=V_t\setminus C_t$ and $t=t+1.$
   \ENDWHILE
   \RETURN $\mathcal{C}=\{C_1,\dots, C_{t-1}\}$.
\end{algorithmic}
\end{algorithm}

We formalize this discussion in the main theorem of this chapter.
\begin{theorem}\label{main_thm_CGW}
Algorithm~\ref{algo_CGW} gives a $4-$approximation.
\end{theorem}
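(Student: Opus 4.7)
The plan is a per-iteration charging argument. At iteration $t$, let $u=p_t$ be the pivot and $C_t$ the cluster formed. Define $E_t$ to be the set of edges with at least one endpoint in $C_t$ and both endpoints in $V_t$; these are precisely the edges whose clustering status (and hence algorithmic cost) is fixed at iteration $t$. Since each edge belongs to exactly one $E_t$, writing $\ALG_t$ and $\LP_t$ for the number of algorithmic mistakes and the LP objective restricted to $E_t$, we have $\ALG=\sum_t \ALG_t$ and $\LP=\sum_t \LP_t$. It therefore suffices to prove $\ALG_t\le 4\,\LP_t$ for each $t$, since then $\ALG\le 4\,\LP\le 4\,\OPT$.

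I split the per-iteration analysis by the test on line~6 of Algorithm~\ref{algo_CGW}. In Case B, where $C_t=\{u\}$, the only mistakes are positive edges $(u,v)$ with $v\in V_t\setminus\{u\}$. If $v\notin S$, then $x_{uv}>\frac{1}{2}$, giving a factor $2$ bound against the LP cost $x_{uv}$ of that edge. The harder sub-case is $v\in S^+:=\{v\in S:(u,v)\in E^+\}$, where $x_{uv}$ may be tiny; here I would combine the Case B condition $\sum_{v\in S} x_{uv}>|S|/4$ with the fact that each negative edge $(u,v)$ for $v\in S^-:=S\setminus S^+$ contributes LP cost $1-x_{uv}\ge\frac{1}{2}$. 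A short calculation yields $|S^+|<4\sum_{v\in S^+} x_{uv}+4\sum_{v\in S^-}(1-x_{uv})$, which is four times the LP cost of the pivot-to-$S$ edges.

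Case A, where $C_t=\{u\}\cup S$, is more delicate. The mistakes come in four types: (i) $(u,v)\in E^-$ with $v\in S$, (ii) $(u,v)\in E^+$ with $v\in V_t\setminus C_t$, (iii) $(v,v')\in E^-$ with $v,v'\in S$, and (iv) $(v,w)\in E^+$ with $v\in S,\;w\in V_t\setminus C_t$. Types (i) and (ii) are bounded directly by factor $2$ using $1-x_{uv}\ge\frac{1}{2}$ and $x_{uv}>\frac{1}{2}$ respectively. For (iii), the LP triangle inequality gives $1\le (1-x_{vv'})+x_{uv}+x_{uv'}$, routing each mistake through the pivot's short edges. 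For (iv), combining $x_{vw}\ge x_{uw}-x_{uv}$ with $x_{uw}>\frac{1}{2}$ gives $x_{uv}+x_{vw}>\frac{1}{2}$, hence $1<2x_{uv}+2x_{vw}$, again routing the cost through $(u,v)$ and the mistake edge itself.

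The main obstacle I anticipate is the combined bookkeeping in Case A: the LP weights on the edges $(u,v)$ with $v\in S$ are borrowed by types (i), (iii), and (iv) simultaneously, so one must verify that the aggregate charge against these pivot-to-$S$ edges stays within $4\,\LP_t$. This is exactly where the threshold $\frac{1}{4}$ in line~6 is calibrated: the averaging condition $\sum_{v\in S}x_{uv}\le |S|/4$ caps the total borrowed weight at $|S|/4$, which, together with the LP costs on edges internal to $S$ and between $S$ and $V_t\setminus C_t$, keeps the final factor at $4$. Once the per-iteration bound is established, summing over $t$ immediately yields the theorem.
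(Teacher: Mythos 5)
Your overall framing matches the paper: a per-iteration charge of $\ALG_t$ against the LP weight removed at step $t$, split on the line-6 test; your singleton case and the direct factor-2 bounds for mistakes of types (i) and (ii) are fine and essentially identical to the paper's. The gap is exactly the step you defer in your last paragraph, and as described your routing cannot be completed. For types (iii) and (iv) you charge each mistake via the triangle inequality to the pivot edges $x_{uv}, x_{uv'}$, but a single pivot edge $(u,v)$ can be incident to $\Omega(|S|)$ negative mistakes inside $S$ (type (iii)) or to one type-(iv) mistake for every outside vertex $w$, so the total weight borrowed from the pivot edges is on the order of $|S|\cdot\sum_{v\in S}x_{uv}$ (or $|V_t\setminus C_t|\cdot\sum_{v\in S}x_{uv}$), not $\sum_{v\in S}x_{uv}$. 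The averaging condition of line 6 only caps the one-time sum $\sum_{v\in S}x_{uv}\le |S|/4$; it does not cap the multiply-counted borrowed total, so the claim that the threshold ``keeps the final factor at $4$'' does not follow from what you have written. Concretely, take half of $S$ at distance $0$ from the pivot and half at distance $\tfrac12$ (average $\tfrac14$, so the cluster is kept), with many outside vertices at distance slightly above $\tfrac12$ joined by short positive edges to the far half of $S$: each far pivot edge is then borrowed by many type-(iv) mistakes while carrying LP weight at most $\tfrac12$.

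The missing idea, which is the heart of the paper's proof, is to change what pays for a problematic mistake. The paper first splits off the easy sub-cases (negative mistakes $(i,j)$ with $x_{uj}\le\tfrac38$ and positive mistakes with $x_{uj}\ge\tfrac34$ pay for themselves at factor $4$), and for the remaining mistakes it charges them to their \emph{distant endpoint} $j$ and pays with the LP weight of \emph{all} edges $(k,j)$, positive and negative, with $k$ closer to the pivot than $j$ (respectively all $k\in C_t$ when $j\notin C_t$). The triangle inequality lower-bounds this pooled weight by $p_j x_{uj}+n_j(1-x_{uj})-\sum_k x_{uk}$, the line-6 condition is then invoked \emph{once per $j$} to bound $\sum_k x_{uk}\le\tfrac14(p_j+n_j)$, and linearity in $x_{uj}$ over $(\tfrac38,\tfrac12]$ (resp.\ $(\tfrac12,\tfrac34)$) yields at least $\tfrac{n_j}{4}$ (resp.\ $\tfrac{p_j}{4}$). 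Because each edge $(k,j)$ is used only for the distant endpoint of the mistakes charged to $j$, nothing is over-borrowed. Without this reallocation (or an equivalent accounting), the bookkeeping you flag as the ``main obstacle'' does not close, so the proposal as it stands has a genuine gap rather than a complete alternative proof.
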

\begin{proof}
We prove the result by showing that the cost of Algorithm~\ref{algo_CGW} is at most $4\LP.$ Observe that at the end of step $t$ of Algorithm~\ref{algo_CGW} we remove the edges having an endpoint in the cluster constructed at step $t$ and never consider them again. Thus, the main strategy is bounding the number of mistakes made at step $t$ by the LP weight of the edges removed at step $t$. Let $u=p_t$ be a pivot picked at step $t$. The analysis is split into two cases: (i) $C_t$ is a singleton or (ii) $C_t$ contains multiple vertices.

Consider the case when $C_t=\{u\}$. The edges incident to $u$ are removed at step $t$ and only those with positive labels are in disagreement with $C_t.$ Furthermore, we have $
\sum\limits_{v\in S}x_{uv}> \frac{\vert S\vert}{4}.$ For a negative edge $ (u,v) $ such that $v\in S$, its LP weight is $1-x_{uv}\geq x_{uv}$ since $x_{uv}\leq \frac{1}{2}.$ Then,
\begin{align}\label{singleton_CGW}
    \sum\limits_{\substack{v\in S:\\ (u,v)\in E^+}}x_{uv}+\sum\limits_{\substack{v\in S:\\ (u,v)\in E^-}}(1-x_{uv})\geq \frac{\vert S\vert}{4}.
\end{align}

Observe that the number of positive edges $(u,v)$ such that $v\in S$ can be at most $\vert S\vert.$ Thus~(\ref{singleton_CGW}) implies that the number of such edges in disagreement is at most four times the LP weight of edges having an endpoint in $S.$

For a positive edge $(u,v)$ such that $v\not\in S$ its LP weight $x_{uv}\geq\frac{1}{2}.$ Thus, the number of such edges in disagreement is at most twice the LP weight of the edges having an endpoint in $V_t\setminus S.$ This finalizes the proof for $C_t=\{ u\}$ case.

In case $C_t=\{ u\} \cup S$ we analyze negative and positive mistakes separately. We start with negative mistakes.

Consider a negative mistake $(i,j)$ such that $i,j\in C_t.$ Without loss of generality $x_{ui}\leq x_{uj}$, i.e., $i$ is closer to $u$ than $j$. If $x_{uj}\leq \frac{3}{8}$ then $x_{ij}\leq x_{uj}+x_{ui}\leq \frac{3}{8}+\frac{3}{8}=\frac{3}{4}$ and LP weight of $(i,j)$ is at least $\frac{1}{4}$. Now consider the case where $x_{uj}\in(\frac{3}{8},\frac{1}{2}].$ Observe that this is the problematic case. In particular, if $i$ is also close to boundary, i.e., $x_{ui}\approx \frac{1}{2}$ then it is possible that $x_{ij}\approx 1$, thus LP weight of $(i,j)$ is too small to compensate itself.

In this case we charge all the negative edges in disagreement $(i,j)$ such that $x_{ui}\leq x_{uj}$ to its distant endpoint $j$. Furthermore, we associate total LP weight of all the edges $(k,j)$ (positive and negative) such that $x_{uk}\leq x_{uj}$ to its distant endpoint $j$. Observe that such $(k,j)$ is removed at step $t$ since $k\in C_t.$ Due to triangle inequality, this quantity is at least
\begin{align}\label{negative_CGW}
\sum\limits_{\substack{(i,j)\in E^{+}:\\ x_{ui}\leq x_{uj}}}(x_{uj}-x_{ui})+\sum\limits_{\substack{(i,j)\in E^{-}:\\ x_{ui}\leq x_{uj}}}(1-x_{uj}-x_{ui})=p_jx_{uj}+n_j(1-x_{uj})-\sum\limits_{x_{ui}\leq x_{uj}}x_{ui}    
\end{align}
where $p_j$ denotes the number of positive edges $(u,i)$ such that $x_{ui}\leq x_{uj}$. Similarly, $n_j$ denotes the number of negative edges $(u,i)$ such that $x_{ui}\leq x_{uj}$. Observe that
\begin{align}\label{negative_2_CGW}
\frac{\vert S\vert}{4}\geq \sum\limits_{v\in S}x_{uv} = \sum\limits_{x_{ui}\leq x_{uj}}x_{ui} + \sum\limits_{\substack{i\in S:\\ x_{ui}> x_{uj}}}x_{ui}> \sum\limits_{x_{ui}\leq x_{uj}}x_{ui} + \frac{3}{8} (\vert S\vert-p_j-n_j)    
\end{align}
where the first inequality follows from $S\subseteq C_t$ and the second inequality follows from $x_{uj}\geq \frac{3}{8}.$ Since $\vert S\vert \geq p_j +n_j$ (\ref{negative_2_CGW}) implies $\frac{1}{4}(p_j+n_j)\geq \sum\limits_{x_{ui}\leq x_{uj}}x_{ui}. $ This together with~(\ref{negative_CGW}) imply that total LP weight associated with $j$ is at least
\begin{align}\label{negative_3_CGW}
    p_jx_{uj}+n_j(1-x_{uj})-\frac{1}{4}(p_j+n_j)
\end{align}

Observe that the number of negative mistakes charged to $j$ is simply $n_j.$ Furthermore,~(\ref{negative_3_CGW}) is a linear function of $x_{uj}\in (\frac{3}{8}, \frac{1}{2}]$, thus obtaining its maximum at its boundary. It is equal to $\frac{p_j}{8}+\frac{3n_j}{8}$ at $x_{uj}=\frac{3}{8}$ and is equal to $\frac{p_j}{4}+\frac{n_j}{4}$ at $x_{uj}=\frac{1}{2}.$ Since $\min\{\frac{p_j}{8}+\frac{3n_j}{8},\frac{p_j}{4}+\frac{n_j}{4}\}\geq \frac{n_j}{4}$ and $j$ is chosen arbitrarily, we have that the number of negative edges in disagreement with $C_t$ is at most four times the LP weight of the edges having both endpoints in $C_t.$

Now we consider the positive mistakes. We follow a similar argument used for the previous case. Let $(i,j)$ be a positive edge in disagreement. Without loss of generality $x_{ui}\leq x_{uj}$. If $x_{uj}\geq \frac{3}{4}$ then LP weight of $(i,j)$ is at least $ x_{ij}\geq x_{uj}-x_{ui}\geq \frac{1}{4}$. Now consider the case $x_{uj}\in(\frac{1}{2},\frac{3}{4})$. Observe that this is the problematic case. In particular, if $i$ is also close to boundary, i.e., $x_{ui}\approx \frac{1}{2}$ then it is possible that $x_{ij}\approx 0$, thus LP weight $(i,j)$ is too small to compensate for itself.

In this case we charge all the positive edges in disagreement $(i,j)$ such that $x_{ui}\leq x_{uj}$ to its distant endpoint $j$. Furthermore, we associate total LP weight of all the edges $(k,j)$ (positive and negative) such that $k\in C_t $ and $x_{uk}\leq x_{uj}$ to its distant endpoint $j$. Observe that these edges are removed at step $t$. Due to triangle inequality, this quantity is at least
\begin{align}\label{positive_CGW}
\sum\limits_{\substack{(i,j)\in E^{+}:\\ i \in C_t, \\ x_{ui}\leq x_{uj}}}(x_{uj}-x_{ui})+\sum\limits_{\substack{(i,j)\in E^{-}:\\ i \in C_t, \\ x_{ui}\leq x_{uj}}}(1-x_{uj}-x_{ui})=p_jx_{uj}+n_j(1-x_{uj})-\sum\limits_{i\in C_t}x_{ui}    
\end{align}
where $p_j$ denotes the number of violated positive edges with a distant endpoint $j$. Similarly $n_j$ denotes the number of negative edges in $C_t\times \{j\}$. Since we have $\sum\limits_{i\in C_t}x_{ui}\leq \frac{1}{4}\vert S\vert$ and $p_j+n_j=\vert S\vert$, (\ref{positive_CGW}) is at least
\begin{align}\label{positive_2_CGW}
   p_jx_{uj}+n_j(1-x_{uj})-\frac{1}{4}(p_j+n_j)
\end{align}
which is a linear function in $x_{uj}\in (\frac{1}{2},\frac{3}{4})$. Using the  same arguments from previous case it follows that~(\ref{positive_2_CGW}) ranges between $\frac{1}{4}(p_j+n_j)$ and $\frac{1}{2}p_j$. Since $\min\{\frac{1}{4}(p_j+n_j),\frac{1}{2}p_j\}\geq \frac{1}{4}p_j$ and $j$ is chosen arbitrarily, we have that the number of positive edges in disagreement with $C_t$ is at most four times the LP weight of the edges having exactly one endpoint in $C_t$. The statement follows.
\end{proof}

\section{Integrality Gap}\label{gap_CGW}
In this section we show that the integrality gap for the standard LP relaxation is at least $2$. The result is due to~\citet*{CGW03}. In particular, we describe an instance of Correlation Clustering problem with integrality ratio at least $2$. Let $G_{CC}=(V,E^+,E^-)$ be a complete graph with a special vertex $c\in V$ and $\vert V\vert =n+1$. All the edges of $G_{CC}$ are negative except the ones incident to $c$. That is, $E^+=\{(c,v)\vert v\in V\mbox{ and }v\neq c\}$ and $E^-=V\times V\setminus E^+.$
\begin{theorem}
The integrality ratio of the Correlation Clustering instance $G_{CC}=(V,E^+,E^-)$ described above is at least $2$.
\end{theorem}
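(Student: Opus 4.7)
The plan is to exhibit a feasible LP solution whose cost is essentially half the integer optimum on $G_{CC}$, then take $n \to \infty$.

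First I would compute the integer optimum. Any clustering $\mathcal{C}$ is determined by which vertices are grouped with $c$ and how the remaining non-$c$ vertices are split. Since every edge between two non-$c$ vertices is negative, putting two such vertices together costs one negative mistake and saves nothing, so in an optimal clustering each non-$c$ vertex outside $c$'s cluster is a singleton. If $c$'s cluster has $k$ other vertices, the cost is $(n-k) + \binom{k}{2}$: the first term counts the positive edges from $c$ to non-$c$ vertices outside $c$'s cluster, and the second counts the negative edges among the $k$ non-$c$ vertices that are pooled with $c$. Minimizing over $k \in \{0,1,\dots,n\}$ gives the minimum $n-1$, attained at $k=1$ and $k=2$. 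Hence $m_{\OPT} = n-1$.

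Next I would construct an explicit LP solution $x^*$ by setting
\begin{align*}
x^*_{cv} &= \tfrac{1}{2} \quad \text{for every } v \in V\setminus\{c\},\\
x^*_{uv} &= 1 \quad \text{for every pair } u,v \in V\setminus\{c\}\text{ with } u\neq v,
\end{align*}
together with $x^*_{uu}=0$ and symmetry. Feasibility amounts to checking the triangle inequality $x^*_{uw} \le x^*_{uv} + x^*_{vw}$ for all triples, which splits into two routine cases depending on how many of $u,v,w$ equal $c$: if none of them is $c$ the inequality reads $1 \le 1 + 1$ (or $1 \le 1 + 0$ if two coincide), and if exactly one of them is $c$ the inequality reads either $\tfrac{1}{2} \le \tfrac{1}{2} + 1$ or $1 \le \tfrac{1}{2} + \tfrac{1}{2}$. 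All cases hold. The LP cost is then
\begin{align*}
\sum_{(u,v)\in E^+} x^*_{uv} + \sum_{(u,v)\in E^-} (1-x^*_{uv}) \;=\; n\cdot\tfrac{1}{2} + 0 \;=\; \tfrac{n}{2},
\end{align*}
so $\LP \le n/2$.

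Combining the two bounds gives $m_{\OPT}/\LP \ge (n-1)/(n/2) = 2 - 2/n$, which tends to $2$ as $n \to \infty$. The main step that requires care is the integer optimum argument: one must observe that mingling non-$c$ vertices with one another can only increase the cost, so the optimization reduces to the single parameter $k$. Once that reduction is justified, the remaining verifications are straightforward substitutions.
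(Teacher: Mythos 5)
Your proposal is correct and follows essentially the same route as the paper: the identical LP solution ($x_{cv}=\tfrac12$ on positive edges, $x_{uv}=1$ on negative edges) giving $\LP\le n/2$, combined with $m_{\OPT}\ge n-1$, yielding a ratio of $2-\tfrac2n\to 2$. The only minor difference is that you determine the integer optimum exactly via the singleton reduction, whereas the paper simply lower-bounds $m_{\OPT}$ by counting the disagreements incident to the cluster containing $c$, namely $(n-r)+\binom{r}{2}\ge n-1$, which avoids reasoning about the structure of an optimal clustering.
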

\begin{proof}
First we show an upper bound on $\LP$ for $G_{CC}$. In particular, we construct a feasible solution with value at most $\frac{n}{2}$. Consider the following solution
\begin{align*}
x_{e}=
\begin{cases} \frac{1}{2}, & \mbox{if }e\in E^+ \\ 1, & \mbox{if }e\in E^-
\end{cases}
\end{align*}

It is easy to verify that $\{x_e\}_{e\in E^+\cup E^-}$ is a feasible solution to the standard LP. Furthermore, its value is $\sum\limits_{e\in E^+}x_e+\sum\limits_{e\in E^-}(1-x_e)=\frac{1}{2}\vert E^+\vert=\frac{n}{2}$. Thus $\LP\leq \frac{n}{2}$. Now we give a lower bound on the number of mistakes of an optimal clustering $\OPT$. Let $C$ be a cluster in $\OPT$ containing $c$, and let $r$ denote the number of remaining vertices in $C$, i.e., $r=\vert C \setminus\{c\}\vert$. The number of edges in disagreement with $C$ is $n-r+\frac{r(r-1)}{2}=n-1+\frac{(r-1)(r-2)}{2}\geq n-1$ for $r\geq 0$. Thus $m_{\OPT}\geq n-1$. It follows that integrality ratio of $G_{CC}$ is at least $\frac{2(n-1)}{n}=2-\frac{2}{n}$ which has limit $2$ as $n$ increases.
\end{proof} 

\chapter{Ailon-Charikar-Newman Algorithm}\label{ACN}
In this chapter, we present the $3-$approximation algorithm for the Correlation Clustering problem by~\citet*{ACN08}. It is a simple randomized algorithm which is not based on LP relaxation. It is worth noting that~\citet{ACN08} also give an LP-based $2.5-$approximation algorithm for the Correlation Clustering problem. We discuss this algorithm as a special case of the algorithm introduced in Chapter~\ref{CMSY}.

Now we describe the algorithm. Loosely speaking, it iteratively picks a random pivot and clusters together the vertices adjacent to the pivot with positive edges. We give a pseudo-code for this algorithm in Algorithm~\ref{algo_ACN}.
\begin{algorithm}
   \caption{Algorithm KwikCluster}
   \label{algo_ACN}
\begin{algorithmic}[1]
   \STATE Let $V_1=V$ be the set of active vertices and $t=1$
   \WHILE{$V_t\neq\emptyset$}
   \STATE Pick a pivot $p_t\in V_t$ uniformly at random.
   \STATE Create a new cluster $C_t$ and add $p_t$ to $C_t$.
   \FORALL{$v\in V_t\mbox{ such that }v\neq p_t$}
   \IF{$(p_t,v)\in E^+$}
   \STATE $C_t=C_t\cup \{ v\}.$
   \ENDIF
   \ENDFOR
   \STATE Let $V_{t+1}=V_t\setminus C_t$ and $t=t+1$.
   \ENDWHILE
   \RETURN $\{C_1,\dots,C_{t-1}\}$.
\end{algorithmic}
\end{algorithm}

As in Chapter~\ref{BBC} we use bad triangles to upper bound the cost of the algorithm.
\begin{lemma}\label{main_lemma_ACN}
There exists a fractional packing of bad triangles, \{$r_T\}_{T\in\mathcal{T}}$, such that the  expected cost of Algorithm~\ref{algo_ACN} is equal to $3\sum\limits_{T\in\mathcal{T}}r_T$.
\end{lemma}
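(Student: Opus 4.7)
The plan is to charge each mistake of KwikCluster to a unique bad-triangle ``witness'' and then take $r_T$ to be $\tfrac{1}{3}$ of the probability that this witness arises. First I would check the following correspondence: whenever the algorithm makes a mistake on an edge $(u, v)$, the pivot $p_t$ of the round at which the clustering of $u$ and $v$ gets finalized sits in a bad triangle $T = \{u, v, p_t\}$ whose three vertices are all still in $V_t$; conversely, each time a pivot lies in a bad triangle $T$ with $T \subseteq V_t$ (which I will call \emph{triggering} $T$), exactly one edge of $T$ becomes a mistake, with the identity of this edge determined by which of the three vertices of $T$ is the pivot. Defining
$$p_T := \Pr\bigl[\exists\, t:\ p_t \in T \text{ and } T \subseteq V_t\bigr], \qquad r_T := \tfrac{1}{3} p_T,$$
linearity of expectation immediately yields $E[\text{cost of KwikCluster}] = \sum_{T \in \mathcal{T}} p_T = 3 \sum_{T \in \mathcal{T}} r_T$, which is the equality asserted in the lemma.

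It remains to verify that $\{r_T\}$ is a fractional packing, i.e.\ $\sum_{T \ni e,\, T \in \mathcal{T}} p_T \le 3$ for every edge $e = (i, j)$. I would split each $p_{\{i,j,k\}}$ into sub-probabilities $q_{T,i}, q_{T,j}, q_{T,k}$ according to which vertex of $T$ is the triggering pivot, and bound each partial sum separately. A short case check on the three types of bad triangles through $e$ shows that the edge $e$ becomes a mistake \emph{precisely} when the third vertex $k$ is the pivot, so
$$\sum_{k:\ \{i,j,k\}\in\mathcal{T}} q_{\{i,j,k\},k} = \Pr[e \text{ is a mistake}] \le 1.$$

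The hard part will be bounding $\sum_k q_{\{i,j,k\}, i}$ (and, symmetrically, the corresponding sum for $j$) by $1$. A per-sample bound is unavailable here: if $i$ becomes a pivot early while many common neighbors $k$ with $\{i,j,k\} \in \mathcal{T}$ are still active, all those triangles get triggered at once. To handle this I plan to switch to the random-permutation view of KwikCluster (process vertices in a uniformly random order $\pi$, skipping those that have been absorbed), under which the event ``$i$ triggers $\{i,j,k\}$'' can be re-expressed in terms of $i$ being the earliest vertex under $\pi$ within a specific subset determined by the positive neighborhoods of $i$, $j$, and $k$. Using exchangeability, together with the observation that within any single sample the events ``$i$ pivots with $j$ still active'' and ``$j$ pivots with $i$ still active'' are mutually exclusive, I expect this careful accounting to give $\sum_k q_{\{i,j,k\}, i} \le 1$. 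Summing the three partial bounds then yields $\sum_{T \ni e} p_T \le 3$, completing the packing check.
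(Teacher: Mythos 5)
Your charging scheme and the equality $\mathbb{E}[\ALG]=\sum_{T\in\mathcal{T}}p_T=3\sum_{T\in\mathcal{T}}r_T$ are exactly the paper's argument: a triangle is ``triggered'' when one of its vertices is pivoted while all three are active, each triggering makes exactly one mistake (the edge of $T$ opposite the pivot), each mistake comes from exactly one triggering, and each $T$ triggers at most once. The gap is in the packing verification, and the missing idea is the paper's one-line symmetry observation: conditioning on the history of Algorithm~\ref{algo_ACN} at any round in which $T\subseteq V_t$ and $p_t\in T$, the pivot is uniform over the three vertices of $T$ (it is chosen uniformly from $V_t$), so your three sub-probabilities are all equal, $q_{T,i}=q_{T,j}=q_{T,k}=\tfrac13 p_T$. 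With this, the one partial sum you do control, $\sum_{k}q_{\{i,j,k\},k}=\Pr[e\text{ is a mistake}]\le 1$, already yields $\sum_{T\ni e}r_T=\sum_{T\ni e}\tfrac13 p_T\le 1$ for every edge $e$, and your ``hard part'' disappears entirely.

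As written, your plan for the other two partial sums is not a proof. The proposed reformulation that ``$i$ triggers $\{i,j,k\}$'' means $i$ is earliest under $\pi$ within a set determined by the positive neighborhoods of $i$, $j$, $k$ is false in general: whether $j$ and $k$ are still active when $i$ is pivoted depends on the entire sequence of earlier pivots (a vertex outside these neighborhood considerations can absorb $j$ or $k$, or prevent $i$ from ever being a pivot), so the event is not a simple ``first in a fixed subset'' event. Likewise, the exclusivity of ``$i$ pivots with $j$ active'' and ``$j$ pivots with $i$ active'' does not by itself control the expected number of triangles $\{i,j,k\}$ triggered simultaneously at the round when $i$ is pivoted, which is exactly the quantity you admit can be large pointwise. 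In short, the cleanest (and, effectively, the paper's) way to bound $\sum_k q_{\{i,j,k\},i}$ is the uniformity-of-the-pivot observation, which reduces it to the bound you already have—so the detour you sketch is both incomplete and unnecessary.
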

\begin{proof}
Let $(u,v)$ be a positive edge in disagreement. Then there must exist $t\in\mathbb{N},\;w\in V$ such that $w=p_t$ is chosen as a pivot and $ u,v,w \in V_t $ at step $t$ and $\vert C_t\cap \{u,v\}\vert =1$. Observe that for $\vert C_t\cap \{u,v\}\vert =1$ exactly one of $(u,w)$ and $(v,w)$ must be positive, i.e., $T=(u,v,w)$ must be a bad triangle.

Similarly, let $(u,v)$ be a negative edge in disagreement. Then there must exist $t\in\mathbb{N},\;w\in V$ such that $w=p_t$ is chosen as a pivot and $ u,v,w \in V_t $ at step $t$ and $\vert C_t\cap \{u,v\}\vert =2$. Observe that for $\vert C_t\cap \{u,v\}\vert =2$ both $(u,w)$ and $(v,w)$ must be positive, i.e., $T=(u,v,w)$ must be a bad triangle.

In other words, an edge $(u,v)$ is a mistake if and only if there exists a bad triangle $T=(u,v,w)$ such that one of its vertices is chosen as a pivot when all its vertices belong to $V_t$ for some $t$. Thus we can charge each mistake to some bad triangle.

For a bad triangle $T=(u,v,w)$, let $A_T$ denote the event that all three $u,v,w$ are in $V_t$ when one among them is chosen as a pivot for some step $t$. Let $p_T$ denote the probability of $A_T$. Let $B_e$ denote the event that an edge $e$ is in disagreement. Conditioned on the event $A_T$, each one of the vertices of $T$ is chosen as the pivot with probability $\frac{1}{3}$ since a pivot is chosen uniformly at random. Similarly, conditioned on the event $A_T$, an edge $e\in T$ is a mistake due to $T$ with probability $\frac{1}{3}$ since it occurs only when a vertex other than its endpoints is chosen as a pivot. Thus, for an edge $e\in T$
\begin{align*}
    \Pr[B_e\cap A_T]=\Pr[B_e\vert\;A_T]\Pr[A_T]=\frac{1}{3}p_T.
\end{align*}

Observe that for two different bad triangles $T,T'\in\mathcal{T}$ sharing an edge $e$, the events $B_e\cap A_T$ and $B_e\cap A_{T'}$ are disjoint since an edge $e$ can be charged to only one bad triangle containing $e$. Therefore, for all $e\in E$,
\begin{align*}
    \sum\limits_{\substack{T\in\mathcal{T}:\\e\in T}}\frac{1}{3}p_T\leq 1.
\end{align*}

That is, $\{\frac{1}{3}p_T\}_{T\in \mathcal{T}}$ is a fractional packing of bad triangles. Observe that a bad triangle $T$ is charged by a mistake only when $A_T$ occurs, and it can be charged at most once. Since each mistake is charged to some bad triangle the expected cost of the algorithm is equal to $\sum\limits_{T\in\mathcal{T}}p_T.$ More formally, let $\ALG$ denote the cost of Algorithm~\ref{algo_ACN}, then
\begin{align*}
    \mathbb{E}[\ALG]=\sum\limits_{e\in E}\Pr[B_e]=\sum\limits_{e\in E}\sum\limits_{\substack{t\in\mathcal{T}:\\ e\in T}}\Pr[B_e\cap A_T]=\sum\limits_{e\in E}\sum\limits_{\substack{t\in\mathcal{T}:\\ e\in T}}\frac{1}{3}p_T=3\sum\limits_{T\in \mathcal{T}}\frac{1}{3}p_T=\sum\limits_{T\in \mathcal{T}}p_T.
\end{align*}

The statement follows.
\end{proof}

Lemma~\ref{main_lemma_ACN} combined with Lemma~\ref{fractionalPacking_Lemma_BBC} give the main theorem of this chapter.
\begin{theorem}\label{main_thm_ACN}
Algorithm~\ref{algo_ACN} gives a $3-$approximation.
\end{theorem}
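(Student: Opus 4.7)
The plan is to stitch together the two lemmas immediately preceding the theorem. Lemma~\ref{main_lemma_ACN} already exhibits a specific fractional packing $\{r_T\}_{T\in\mathcal{T}}$ of bad triangles — concretely $r_T = \tfrac{1}{3}p_T$, where $p_T$ is the probability that all three vertices of $T$ are simultaneously active when one of them is first drawn as a pivot — for which $\mathbb{E}[\ALG] = 3\sum_{T\in\mathcal{T}} r_T$. Lemma~\ref{fractionalPacking_Lemma_BBC} asserts that for \emph{any} fractional packing of bad triangles, $\sum_T r_T \le m_{\OPT}$, since every bad triangle must contribute at least one edge to the mistake set of $\OPT$.

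First I would invoke Lemma~\ref{main_lemma_ACN} to select precisely this packing, then apply Lemma~\ref{fractionalPacking_Lemma_BBC} to it to obtain $\sum_T r_T \le m_{\OPT}$. Multiplying through by $3$ yields $\mathbb{E}[\ALG] \le 3\,m_{\OPT}$, which is exactly the desired $3$-approximation in expectation and completes the proof.

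There is essentially no obstacle at this final step; all of the genuine work was absorbed into Lemma~\ref{main_lemma_ACN}, where one had to argue (i) that every mistake can be charged to a bad triangle containing it, (ii) that the events $B_e\cap A_T$ are disjoint across distinct bad triangles $T$ sharing $e$ so that the packing constraint~(\ref{fractionalPacking}) holds for $\{\tfrac{1}{3}p_T\}$, and (iii) that the uniform random choice of pivot supplies the clean factor of $1/3$. Once that structural identity is available, the theorem follows in a single line by combining it with the universal lower bound from Lemma~\ref{fractionalPacking_Lemma_BBC}.
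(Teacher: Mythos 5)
Your proposal is correct and matches the paper exactly: the theorem is obtained by taking the packing $\{\tfrac{1}{3}p_T\}$ from Lemma~\ref{main_lemma_ACN} and applying the universal bound of Lemma~\ref{fractionalPacking_Lemma_BBC} to get $\mathbb{E}[\ALG]=3\sum_T r_T\leq 3\,m_{\OPT}$. This one-line combination is precisely how the paper derives Theorem~\ref{main_thm_ACN}.
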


Lastly, we show that the analysis of Algorithm~\ref{algo_ACN} is tight. In particular, we give an instance such that Algorithm~\ref{algo_ACN} has expected cost at least three times the cost of an optimal solution. Let $G_{CC}=(V,E^+,E^-)$ be a complete unweighted graph and $u,v,\in V$ be two special vertices such that $E^-=\{(u,v)\}$ and $E^+=V\times V\setminus E^-.$
\begin{observation}
The expected cost of Algorithm~\ref{algo_ACN} on $G_{CC}$ is at least $3m_{\OPT}$.
\end{observation}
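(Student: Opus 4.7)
The plan is to compute $m_{\OPT}$ and $\mathbb{E}[\ALG]$ in closed form on the instance $G_{CC}$, and then obtain the claimed lower bound by driving the ratio to $3$ as $n := |V| \to \infty$; I note at the outset that the literal inequality $\mathbb{E}[\ALG] \geq 3 m_{\OPT}$ is attained only in this limit, since a direct computation will give $\mathbb{E}[\ALG] = 3(1 - 2/n)\, m_{\OPT}$, so the observation is really a statement about the approximation factor of Algorithm~\ref{algo_ACN}. First I would establish $m_{\OPT} = 1$: the clustering $\{V\}$ disagrees only with the lone negative edge $(u,v)$, and any proper partition of $V$ must cut a positive edge incident to $u$ or $v$ while saving at most the single edge $(u,v)$, so it cannot beat cost $1$.

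Next I would compute $\mathbb{E}[\ALG]$ by conditioning on the first pivot $p_1$, drawn uniformly from $V$; by the structure of the instance the first iteration essentially determines everything. With probability $2/n$ the pivot is $u$ or $v$. Then $C_1$ absorbs every positive neighbor of $p_1$, i.e., every vertex except the other special vertex, which is left alone and becomes a singleton in iteration $2$. This produces $n-2$ positive mistakes (the positive edges between the singleton and each non-special vertex), while the negative edge $(u,v)$ is now correctly separated. With probability $(n-2)/n$ the pivot is some $w \notin \{u,v\}$; then $C_1 = V$, and the unique mistake is the negative edge $(u,v)$, now inside the single cluster. Summing yields
\begin{align*}
\mathbb{E}[\ALG] \;=\; \frac{2}{n}(n-2) \;+\; \frac{n-2}{n}\cdot 1 \;=\; \frac{3(n-2)}{n}.
\end{align*}

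Combining the two computations gives $\mathbb{E}[\ALG]/m_{\OPT} = 3 - 6/n$, which approaches $3$ from below as $n \to \infty$. Hence for every $\varepsilon > 0$ there is an instance in this family witnessing $\mathbb{E}[\ALG] \geq (3 - \varepsilon)\, m_{\OPT}$, which is the sense in which the expected cost is ``at least'' $3 m_{\OPT}$ and in which Theorem~\ref{main_thm_ACN} is tight: no analysis can improve the approximation factor of Algorithm~\ref{algo_ACN} below $3$ on general instances.

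The main (and essentially only) obstacle is the case analysis on $p_1$: one must verify that the algorithm terminates after exactly two iterations in the first sub-case and after one in the second, and that the disagreement counts $n-2$ and $1$ respectively are correct. As a sanity check I would rederive $\mathbb{E}[\ALG]$ from Lemma~\ref{main_lemma_ACN}: the bad triangles of $G_{CC}$ are exactly the $n-2$ triangles $(u,v,w)$ with $w \notin \{u,v\}$, and for each such $T$ the event $A_T$ occurs iff the very first pivot lies in $\{u,v,w\}$, so $p_T = 3/n$; summing gives $\sum_T p_T = 3(n-2)/n$, matching the direct computation.
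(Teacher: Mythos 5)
Your proposal is correct and follows essentially the same route as the paper: both establish $m_{\OPT}=1$, condition on the first pivot to get $\mathbb{E}[\ALG]=\tfrac{2}{n}(n-2)+\tfrac{n-2}{n}\cdot 1=3-\tfrac{6}{n}$, and interpret the bound as holding in the limit $n\to\infty$ (the paper likewise concedes the ratio only tends to $3$). Your extra cross-check via the fractional-packing lemma and your direct counting argument for $m_{\OPT}=1$ (the paper instead cites the bad-triangle lower bound) are minor embellishments, not a different proof.
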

\begin{proof}
Observe that $m_{\OPT}\geq 1$ since $(u,v,w)$ is a bad triangle for any $w\in V\setminus \{u,v\}$. And clustering all the vertices together, $\mathcal{C}=V$, obtains this lower bound. Thus $m_{\OPT}=1$. Now consider the very first iteration of Algorithm~\ref{algo_ACN}. If Algorithm~\ref{algo_ACN} picks any vertex other than $u$ or $v$ as a pivot then the clustering it outputs is $\mathcal{C}=V$ and probability of this event is $\frac{n-2}{n}.$ On the other hand, if it picks $u$ or $v$ as a pivot then the clustering it outputs is $\mathcal{C}=(V\setminus \{a\}, \{a\})$ where $a$ is the vertex in $\{u,v\}$ other than the pivot. The cost of such a clustering is $n-2$ and probability of this event is $\frac{2}{n}.$ Thus,
\begin{align*}
    \mathbb{E}[\ALG]=\frac{n-2}{n}\cdot 1+\frac{2}{n}\cdot(n-2)=\frac{3(n-2)}{n}=3-\frac{6}{n}
\end{align*}
and this expression has limit $3$ as $n$ increases. The statement follows.
\end{proof}
\chapter{Chawla-Makarychev-Schramm-Yaroslavtsev Algorithm}\label{CMSY}
In this chapter we present the $2.06-$approximation algorithm for the Correlation Clustering problem by~\citet*{CMSY15}. It is based on the standard LP relaxation described in Chapter~\ref{CGW}. Its approximation guarantee is almost matching the integrality gap of $2$ given in Section~\ref{gap_CGW}.

Now we describe the algorithm. It takes an optimal solution of the standard LP as an input and rounds it to an integral solution. Loosely speaking, it iteratively picks a random pivot and independently adds each vertex to a new cluster with probability based on the LP length of an edge connecting it to the pivot. In particular, for a pivot $p$ it adds $u$ with probability $1-f^{+}(x_{pu})$ if $(p,u)$ is a positive edge, and with probability $1-f^{-}(x_{pu})$ if $(p,u)$ is a negative edge. Here $f^{+}, f^{-}$ are functions which will be defined later. We give a pseudo-code for this algorithm in Algorithm~\ref{algo_CMSY}.

It is important to mention that Algorithm~\ref{algo_CMSY} can be considered as a generalization of the $2.5-$ approximation algorithm for the Correlation Clustering problem by~\citet*{ACN08}. In particular, ~\citet{ACN08} considered both $f^{+}$ and $f^{-}$ to be identity functions, i.e., $f^{+}(x)=f^{-}(x)=x$ for all $x$.

Furthermore, the analysis of Algorithm~\ref{algo_CMSY} follows the general approach proposed by~\citet*{ACN08}. \citet{ACN08} observed that in order to get upper bounds on the approximation
factors of their algorithms, it is sufficient to consider how these algorithms
behave on triangles.
\begin{algorithm}
   \caption{Rounding Algorithm}
   \label{algo_CMSY}
\begin{algorithmic}[1]
   \STATE Let $V_1=V$ be the set of active vertices and $t=1$.
   \WHILE{$V_t\neq \emptyset$}
   \STATE Pick a pivot $p_t\in V_t$ uniformly at random.
   \STATE Create a new cluster $C_t$; add the pivot $p_t$ to $C_t$.
   \FORALL{$v\in V_t\setminus \{p_t\}$}
   \IF{$(p_t,v)\in E^{+}\mbox{ is a positive edge}$}
   \STATE Add $v$ to $C_t$ with probability $(1-f^{+}(x_{p_{t}v}))$ independently of all other vertices.
   \ELSIF{$(p_t,v)\in E^{-}\mbox{ is a negative edge}$}
   \STATE Add $v$ to $C_t$ with probability $(1-f^{-}(x_{p_{t}v}))$ independently of all other vertices.
   \ENDIF
   \ENDFOR
   \STATE Let $V_{t+1}=V_t\setminus C_t$ and $t=t+1$.
   \ENDWHILE
   \RETURN $\mathcal{C}=\{C_1,\dots, C_{t-1}\}$.
\end{algorithmic}
\end{algorithm}
\section{General Approach: Triangle-Based Analysis}
\label{triple_Analysis}
Consider an instance of Correlation Clustering  $G=(V,E^+,E^-)$ on three vertices $u$, $v$, $w$. Suppose that the edges $(u,v)$, $(v,w)$, and $(u,w)$ have
signs $\sigma_{uv}, \sigma_{vw}, \sigma_{uw}\in \{\pm\}$, respectively. We shall call this instance a triangle $(u,v,w)$ and refer to
the vector of signs $\sigma =(\sigma_{vw}, \sigma_{uw}, \sigma_{uv})$ as the signature of the triangle~$(u,v,w)$.

Let us now assign arbitrary lengths $x_{uv}$, $x_{vw}$, and $x_{uw}$ satisfying the triangle inequality
to the edges $(u,v)$, $(v,w)$, and $(u,w)$ and run one iteration of Algorithm~\ref{algo_CMSY} on the triangle $(u,v,w)$:
\begin{algorithm}
   \caption{One iteration of Algorithm~\ref{algo_CMSY} on triangle $(u,v,w)$}
\begin{algorithmic}
   \STATE Pick a random pivot $p\in \{u,v,w\}$.
   \STATE Create a new cluster $C$. Insert $p$ in $C$.
   \FORALL{$a \in \{u,v,w\}\setminus\{p\}$}
   \IF{$(p,a)\in E^{+}\mbox{ is a positive edge}$}
   \STATE Add $a$ to $C$ with probability $(1-f^{+}(x_{pa}))$ independently of all other vertices.
   \ELSIF{$(p,a)\in E^{-}\mbox{ is a negative edge}$}
   \STATE Add $a$ to $C$ with probability $(1-f^{-}(x_{pa}))$ independently of all other vertices.
   \ENDIF
   \ENDFOR
\end{algorithmic}
\end{algorithm}

Observe that a positive edge $(u,v)$ is in disagreement with $C$ if $u\in C$ and $v\notin C$ or $u\notin C$ and $v\in C$. Similarly, a negative edge $(u,v)$ is in disagreement with $C$ if $u,v\in C$. Let $cost(u,v\given w)$ be the probability that the
edge $(u,v)$ is in disagreement with $C$ given that $w$ is the pivot:
$$cost(u,v\given w) =
\begin{cases}
\Pr[u\in C, v\notin C \text{ or } u\notin C, v\in C\given p = w],& \text{if } \sigma_{uv} = \PlusSign;\\
\Pr[u\in C , v\in C\given p = w],& \text{if } \sigma_{uv} = \MinusSign.
\end{cases}$$

Similarly, let $lp(u,v\given w)$ be the probability that one or both of the vertices $u$ and $v$ are in $C$ given that $w$ is the pivot:
$$lp(u,v\given w) =
\begin{cases}
x_{uv}\cdot \Pr[u\in C \text{ or } v\in C \given p = w],& \text{if } \sigma_{uv} = \PlusSign;\\
(1-x_{uv}) \cdot \Pr[u\in C \text{ or } v\in C \given p = w],& \text{if } \sigma_{uv} = \MinusSign.
\end{cases}$$
It follows that
\begin{align}\label{cost_function_CMSY}
 cost(u,v\given w) =
\begin{cases}
f^{\sigma_{wu}}(x_{wu})+f^{\sigma_{wv}}(x_{wv})-2f^{\sigma_{wu}}(x_{wu})f^{\sigma_{wv}}(x_{wv}),& \text{if } \sigma_{uv} = \PlusSign;\\
(1-f^{\sigma_{wu}}(x_{wu}))(1-f^{\sigma_{wv}}(x_{wv})),& \text{if } \sigma_{uv} = \MinusSign.
\end{cases}
\end{align}
and
\begin{align}\label{lp_function_CMSY}
lp(u,v\given w) =
\begin{cases}
x_{uv}(1-f^{\sigma_{wu}}(x_{wu})f^{\sigma_{wv}}(x_{wv})),& \text{if } \sigma_{uv} = \PlusSign;\\
(1-x_{uv})(1-f^{\sigma_{wu}}(x_{wu})f^{\sigma_{wv}}(x_{wv})),& \text{if } \sigma_{uv} = \MinusSign.
\end{cases}  
\end{align}

We define two functions $ALG^{\sigma}(x,y,z)$ and $LP^{\sigma}(x,y,z)$. To this end,
construct a triangle $(u,v,w)$ with signature $\sigma$ edge lengths $x,y,z$ (where
$x_{vw} = x$, $x_{uw} = y$, $x_{uv} = z$).
Then, define
\begin{align}\label{triangle_algo_CMSY}
ALG^{\sigma}(x,y,z)= cost(u,v\given w) +  cost(u,w\given v)+  cost(v,w\given u);
\end{align}
and
\begin{align}\label{triangle_lp_CMSY}
    LP^{\sigma}(x,y,z)= lp(u,v\given w) + lp(u,w\given v)+lp(v,w\given u).
\end{align}

Now we show that in order to upper bound the cost of Algorithm~\ref{algo_CMSY} it is sufficient to analyze its performance on triangles. This observation is first used by~\citet*{ACN08} and explicitly stated by~\citet*{CMSY15}.
\begin{lemma}\label{main_lemma_CMSY}
Consider functions $f^+,f^-$ with $f^+(0)=f^-(0) = 0$. If for all signatures $\sigma=(\sigma_1,\sigma_2,\sigma_3)$
(where each $\sigma_i\in \{\pm\}$) and edge lengths $x$, $y$, $z$ satisfying the triangle inequality,
we have $ALG^{\sigma}(x,y,z)\leq \rho LP^{\sigma}(x,y,z)$, then the approximation factor of Algorithm~\ref{algo_CMSY} is at most $\rho$.
\end{lemma}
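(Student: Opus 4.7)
The plan is to decompose both $\mathbb{E}[\ALG]$ and $\LP$ as weighted sums of per-triple and per-pair contributions, and then apply the hypothesis term-by-term. For each unordered triple $T=\{u,v,w\}$, let $A_T$ be the event that at some iteration $t$ the set $T$ is contained in $V_t$ and the pivot $p_t$ lies in $T$; set $q_T=\Pr[A_T]$. Since the algorithm is symmetric in the vertex labels, conditional on $A_T$ each of $u,v,w$ is the pivot at the realizing step with probability $\tfrac13$, and the random additions of $u$ and $v$ at that step are governed exactly by the single-iteration formulas~\eqref{cost_function_CMSY} and~\eqref{lp_function_CMSY}. Similarly, for each unordered pair $P=\{u,v\}$, let $B_P$ be the event that $u$ or $v$ is ever the pivot while both endpoints are still active, and set $p_P=\Pr[B_P]$.

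The first step is to derive the identities
\[
\mathbb{E}[\ALG] = \sum_P p_P\,h(u,v) + \sum_T \tfrac{q_T}{3}\, ALG^{\sigma_T},
\qquad
\LP = \sum_P p_P\,\ell(u,v) + \sum_T \tfrac{q_T}{3}\, LP^{\sigma_T},
\]
where $h(u,v)$ is the mistake probability on edge $(u,v)$ when one of its endpoints is the pivot (equal to $f^{+}(x_{uv})$ or $1-f^{-}(x_{uv})$ according to the sign), and $\ell(u,v)$ is the edge's LP cost ($x_{uv}$ or $1-x_{uv}$). These follow from the observation that every pair $\{u,v\}$ is resolved exactly once, and that $B_P$ together with the events ``$p_t=w$ at a step with $\{u,v,w\}\subseteq V_t$ and at least one of $u,v$ entering $C_t$'' (one per $w\notin P$) partition the pair-resolution event. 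Multiplying these probabilities by $\ell(u,v)$ (respectively $h(u,v)$) turns each triple contribution into $(q_T/3)\,lp(u,v|w)$ (respectively $(q_T/3)\,cost(u,v|w)$), and regrouping over triples produces $(q_T/3)\,LP^{\sigma_T}$ and $(q_T/3)\,ALG^{\sigma_T}$.

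The hypothesis immediately gives $ALG^{\sigma_T}\le\rho\, LP^{\sigma_T}$ for each triple, bounding the triple sum by $\rho$ times its LP analogue. For the pair sums, I would apply the hypothesis to a degenerate isoceles triangle $(u,v,w)$ with $x_{uw}=0$, which forces $x_{vw}=x_{uv}$ by the triangle inequality. Because $f^{+}(0)=f^{-}(0)=0$, a short calculation yields $LP^{\sigma}=2\ell(u,v)$ and $ALG^{\sigma}=2h(u,v)\bigl(2-h(u,v)\bigr)$ (the same formula in both positive and negative cases). Since $h(u,v)\in[0,1]$, the factor $2-h(u,v)\ge 1$, so the hypothesis forces $h(u,v)\le\rho\,\ell(u,v)$ per pair. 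Adding the triple and pair bounds gives $\mathbb{E}[\ALG]\le\rho\,\LP$.

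The main obstacle is establishing the two decomposition identities cleanly. One must justify the symmetry claim $\Pr[A_T,\, p_t=w]=q_T/3$ (which follows from the uniform-pivot rule and the label-symmetry of the algorithm) and verify that the pair and triple events indeed partition the pair's resolution event, ensuring that the per-pair probabilities sum to $1$ and cleanly produce the pair term $p_P\ell(u,v)$ (respectively $p_P h(u,v)$). Once the identities are in place, the rest is a routine application of the hypothesis.
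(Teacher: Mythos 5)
Your argument is correct, but it takes a genuinely different route from the paper's. You establish the two global identities $\mathbb{E}[\ALG]=\sum_P p_P h+\sum_T (q_T/3)\,ALG^{\sigma_T}$ and $\LP=\sum_P p_P \ell+\sum_T (q_T/3)\,LP^{\sigma_T}$ by charging each edge to the unique step at which it is resolved, in the spirit of the KwikCluster analysis (Lemma~\ref{main_lemma_ACN}): the first-hit events $A_T$ with $\Pr[A_T,\ p_t=w]=q_T/3$, the partition of the pair-resolution event, and the independence of the membership coins from the history make the bookkeeping go through, and these are indeed the points that need care. The paper instead argues per iteration: writing $\ALG=\sum_t ALG_t$ and $\LP=\sum_t LP_t$, it expresses $\mathbb{E}[ALG_t\given V_t]$ and $\mathbb{E}[LP_t\given V_t]$ as sums of $cost(u,v\given w)$ and $lp(u,v\given w)$ over all triples of active vertices, folds the pivot-is-an-endpoint case into these sums by introducing zero-cost positive self-loops, and applies the hypothesis termwise for each fixed $V_t$; conditioning on $V_t$ reduces everything to a finite deterministic sum, so none of your probabilistic bookkeeping is needed. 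Conversely, your handling of the pair terms is more explicit than the paper's: the paper's grouped self-loop terms are not literally $ALG^\sigma$ of a genuine triangle but are dominated by it, and your derivation of $h\le\rho\,\ell$ from the degenerate isoceles triangle (using $2-h\ge 1$) is precisely the observation that justifies this step. One small fix: specify the signature of the degenerate triangle---label the zero-length edge positive and give $(v,w)$ the same sign as $(u,v)$. With that choice the identities $LP^\sigma=2\ell$ and $ALG^\sigma=2h(2-h)$ hold exactly as you claim, whereas a negative zero-length edge produces different formulas and does not directly yield $h\le\rho\,\ell$; since the hypothesis quantifies over all signatures, you are free to make this choice.
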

\begin{proof}
Our first task is to express the number of mistakes made by Algorithm~\ref{algo_CMSY} and the LP weight in terms of $ALG^\sigma(\cdot)$ and $LP^\sigma(\cdot)$, respectively. In order to do this, we consider the number of mistakes made by the algorithm at each step.

Consider step $t$ of the algorithm. Let $V_t$ denote the set of active (yet unclustered) vertices at the start of step $t$. Let $w\in V_t$ denote the pivot chosen at step $t$.
The algorithm chooses a set $C_t \subseteq V_t$ as a cluster and removes it from the graph. Notice that
for each $u \in C_t$, the constraint imposed by each edge of type $(u,v) \in E^+ \cup E^-$ is satisfied or violated right after step $t$. Specifically, if $(u,v)$ is a positive edge, then the constraint $(u,v)$ is violated if exactly one of the vertices $u,v$ is in $C_t$. If $(u,v)$ is a negative constraint, then it is violated if both $u,v$ are in $C_t$. Denote the number of mistakes at step $t$ by $ALG_t$. Thus,
\begin{align*}
ALG_t=\sum\limits_{\substack{(u,v) \in E^+\\u,v \in V_t}}\ONE \left(u \in C_t, v \not\in C_t \mbox{ or }u \not\in C_t, v \in C_t\right)+\sum\limits_{\substack{(u,v) \in E^-\\u,v \in V_t}}\ONE \left(u \in C_t, v\in C_t\right).
\end{align*}

Similarly, we can quantify the LP weight removed by the algorithm at step $t$, which we denote by $LP_t$.
We count the contribution of all edges $(u,v) \in E^+ \cup E^-$ such that $u \in C_t$ or $v \in C_t$. Thus,
\begin{align*}
LP_t = \sum_{\substack{(u,v) \in E^+\\u,v \in V_t}} x_{uv} \cdot \ONE(u \in C_t \text{ or } v \in C_t)+ \sum_{\substack{(u,v) \in E^-\\u,v \in V_t}} (1 - x_{uv}) \cdot \ONE(u \in C_t \text{ or } v \in C_t).
\end{align*}

Let $\ALG$ denote the cost of the algorithm. Note that the cost of the the algorithm is the sum of the mistakes across all steps, that is $\ALG = \sum_t ALG_t$. Moreover, as every edge is removed exactly once from the graph, we can see that $\LP = \sum_t LP_t$. We will charge the number of the mistakes at step $t$, $ALG_t$, to the LP weight removed at step $t$, $LP_t$. Hence, if we show that $\mathbb{E}[ALG_t] \leq \rho \mathbb{E}[LP_t]$ for every step $t$, then we can conclude that the approximation factor of the algorithm is at most $\rho$, since
$$
\mathbb{E}[\ALG] = \mathbb{E}\bigg[\sum_t ALG_t \bigg] \leq \rho \cdot \mathbb{E}\bigg[\sum_t LP_t \bigg] =  \rho\cdot \LP.
$$

We now express $ALG_t$ and $LP_t$ in terms of $cost(\cdot)$ and $lp(\cdot)$ which are defined in Section~\ref{triple_Analysis}. This will allow us to group together the terms for each triplet $u,v,w$ in the set of active vertices and thus write $ALG_t$ and $LP_t$ in terms of $ALG^\sigma(\cdot)$ and $LP^\sigma(\cdot)$, respectively.

For analysis, we assume that for each vertex $u \in V$, there is a positive (similar) self-loop $(u,u)$.
This edge is never in disagreement. Then, we can define $cost(u,u\given w)$ and $lp(u,u \given w)$ formally as follows:
$cost(u,u \given w) = \Pr[u \in C, u \not\in C \given p = w]
= 0$ and $lp(u,u \given w) = x_{uu} \cdot \Pr[u \in C\given p = w] = 0$ (recall that $x_{uu}=0$ and $f(0)=0$). Observe that
\begin{align}\label{ALGcost}
\mathbb{E}[ALG_t \given  V_t] = \sum_{\substack{(u,v) \in E\\u,v \in V_t}} \bigg( \frac{1}{|V_t|} \sum_{w \in V_t} cost(u,v\given w)\bigg)= \frac{1}{2|V_t|}\sum_{\substack{u,v,w \in V_t \\ u \neq v}} cost(u,v\given w)
\end{align}
and
\begin{align}\label{LPcost}
\mathbb{E}[LP_t \given V_t] = \sum_{\substack{(u,v) \in E\\u,v \in V_t}} \bigg( \frac{1}{|V_t|} \sum_{w \in V_t} lp(u,v\given w)\bigg)= \frac{1}{2|V_t|}\sum_{\substack{u,v,w \in V_t \\ u \neq v}} lp(u,v\given w).
\end{align}

We divide the expressions on the right hand side by $2$ because the terms $cost(u,v \given w)$ and $lp(u,v \given w)$ are counted twice. Now adding the contribution of terms $cost(u,u \given w)$ and $lp(u,u \given w)$ (both equal to $0$) to~(\ref{ALGcost}) and~(\ref{LPcost}), respectively and grouping the terms containing $u,v$ and $w$ together, we get,
\begin{align*}
\mathbb{E}[ALG_t \given V_t] &= \frac{1}{6|V_t|}\sum_{u,v,w \in V_t}\bigg( cost(u,v\given w)+ cost(u,w\given v) + cost(w,v\given u)\bigg)\\
&=\frac{1}{6|V_t|}\sum_{u,v,w \in V_t} ALG^\sigma(x,y,z);
\end{align*}
and
\begin{align*}
\mathbb{E}[LP_t \given V_t] =& \frac{1}{6|V_t|}\sum_{u,v,w \in V_t}\bigg(lp(u,v\given w) + lp(u,w\given v) + lp(w,v\given u)\bigg)\\
=&\frac{1}{6|V_t|}\sum_{u,v,w \in V_t}LP^\sigma(x,y,z).
\end{align*}

Thus, if $ALG^\sigma(x,y,z) \leq \rho LP^\sigma(x,y,z)$ for all signatures and edge lengths $x,y,z$ satisfying the triangle inequality, then $\mathbb{E}[ALG_t \given V_t] \leq \rho \cdot \mathbb{E}[LP_t \given V_t]$, and, hence, $\mathbb{E}[\ALG] \leq \rho \cdot\LP$ which finishes the proof.
\end{proof}

Now we state the main theorem of this chapter.
\begin{theorem}\label{main_thm_CMSY}
Algorithm~\ref{algo_CMSY} with rounding functions
\begin{align*}
 f^{+}(x) =
 \begin{cases}
 0, & \mbox{if } x<a \\ \big(\frac{x-a}{b-a}\big)^2, & \mbox{if } x\in [a,b],\\
 1, & \mbox{if } x\geq b
 \end{cases}
 \;\;\;\;\;\;f^{-}(x)=x,
 \end{align*}
gives a $(2.06-\varepsilon)-$approximation for $a=0.19$ and $b=0.5095$, and a constant $\varepsilon$ with $0<\varepsilon<0.01$.
\end{theorem}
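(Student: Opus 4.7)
The plan is to invoke Lemma~\ref{main_lemma_CMSY}, which reduces the claim that Algorithm~\ref{algo_CMSY} is a $(2.06-\varepsilon)$-approximation to verifying the pointwise inequality
$$ALG^{\sigma}(x,y,z) \;\leq\; (2.06-\varepsilon)\,LP^{\sigma}(x,y,z)$$
for every signature $\sigma \in \{\pm\}^{3}$ and every triple $x,y,z \in [0,1]$ obeying the triangle inequality, under the explicit rounding functions $f^{+}$ and $f^{-}$ specified in the statement. Note that $f^{+}(0)=f^{-}(0)=0$, so the hypothesis of Lemma~\ref{main_lemma_CMSY} is met.

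I would begin by enumerating the distinct signatures up to the symmetric-group action on the three vertices, leaving four cases: $(+,+,+)$, $(+,+,-)$, $(+,-,-)$, and $(-,-,-)$. For each signature I substitute~(\ref{cost_function_CMSY}) and~(\ref{lp_function_CMSY}) into the definitions~(\ref{triangle_algo_CMSY}) and~(\ref{triangle_lp_CMSY}) to obtain explicit formulas for $ALG^{\sigma}$ and $LP^{\sigma}$ as functions of $x,y,z$. Since $f^{-}$ is the identity, the all-negative signature yields low-degree polynomial expressions and the mixed signatures pick up piecewise-polynomial contributions only through the positive edges.

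The principal obstacle is handling the piecewise form of $f^{+}$, which has three branches according to whether its argument lies in $[0,a)$, $[a,b]$, or $(b,1]$. In a signature with $k$ positive edges this produces $3^{k}$ sub-regions, and within each sub-region the inequality must be checked for all $(x,y,z)$ in the intersection of that sub-region with the triangle-inequality polytope $\{x\leq y+z,\ y\leq x+z,\ z\leq x+y\}\cap [0,1]^{3}$. Inside each sub-region, both $ALG^{\sigma}$ and $LP^{\sigma}$ are polynomials of bounded degree (at most $4$), so the claim reduces to a finite collection of polynomial inequalities over explicit polytopes. These can be verified either by Karush--Kuhn--Tucker analysis on the facets of each polytope together with interior-critical-point analysis, or, with more computational effort, by a sufficiently fine grid search coupled with explicit Lipschitz bounds.

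The signature $(+,+,-)$ is where I expect the analysis to be tightest: this is precisely the bad-triangle configuration where any integer clustering must incur a mistake, while the LP can take advantage of fractional settings such as $x=y=1/2$, $z=1$ to drive $LP^{\sigma}$ toward its minimum. It is for this case that the quadratic middle piece of $f^{+}$ and the particular choice $a=0.19$, $b=0.5095$ are calibrated: the parameters are selected so that the worst-case ratio across all four signatures and all of their sub-regions sits strictly below $2.06$, yielding the desired slack $\varepsilon \in (0,0.01)$. Once this uniform pointwise bound is verified, Lemma~\ref{main_lemma_CMSY} immediately upgrades it to the global approximation guarantee on the whole instance.
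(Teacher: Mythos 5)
Your proposal takes essentially the same route as the paper: apply Lemma~\ref{main_lemma_CMSY} with $f^{+}(0)=f^{-}(0)=0$ to reduce the theorem to the per-triangle inequality $ALG^{\sigma}(x,y,z)\leq \rho\, LP^{\sigma}(x,y,z)$, and then verify it by case analysis over the signatures and the three pieces of $f^{+}$ on the triangle-inequality polytope. Like the paper's own proof sketch --- which fully works out only the signature $(\PlusSign,\MinusSign,\MinusSign)$, notes the reduction to tight triangles, and defers the remaining case analysis to the original CMSY paper --- you leave the exhaustive polynomial verification as a plan rather than executing it, so the two arguments are at essentially the same level and of the same structure.
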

\noindent
\textit{Proof Sketch:} The analysis of Algorithm~\ref{algo_CMSY} relies on Lemma~\ref{main_lemma_CMSY}. It is sufficient to show that
\begin{equation*}\label{triangle_inequality_CMSY}
ALG^{\sigma}(x,y,z)\leq \rho\cdot LP^{\sigma}(x,y,z),
\end{equation*}
holds for every triangle $(u,v,w)$ with edge lengths $(x,y,z)$ (satisfying the triangle inequality) and signature $\sigma = (\sigma_{vw},\sigma_{uw}, \sigma_{uv})$ where $\rho=2.06$.
To simplify the exposition, here we demonstrate only the signature $\sigma=(\PlusSign,\MinusSign,\MinusSign)$.\footnote{The proof is technical and involves multiple case analysis. See~\citet{CMSY15} (Appendix A) for the complete proof.}

Before delving into the details, we describe the general proof strategy and give an intuition for the choice of $f^+$ and $f^-$.
The main idea is to represent $\rho\cdot LP^{\sigma}(x,y,z)-ALG^{\sigma}(x,y,z) $ as a polynomial $ \mathcal{P}(x,y,z,f^{\sigma_{vw}}(x),f^{\sigma_{uw}}(y),f^{\sigma_{uv}}(z)) $ and show that it is nonnegative for all possible edge lengths $(x,y,z)$ satisfying the triangle inequality and for all possible signatures $\sigma$. Observe that we can compute $\mathcal{P}$ by~(\ref{cost_function_CMSY}),~(\ref{lp_function_CMSY}),~(\ref{triangle_algo_CMSY}) and~(\ref{triangle_lp_CMSY}). Furthermore, it can be shown that it is sufficient to analyze $\mathcal{P}$ for $(x,y,z)$ such that triangle inequality is tight. Now we give an informal explanation for the choice of $f^{+}$ and $f^{-}$.

For $\sigma=(\PlusSign,\MinusSign,\MinusSign)$ and edge lengths $(0,x,x)$, $ \mathcal{P}\geq 0$ implies
\begin{align*}
    f^{-}(x)\geq\sqrt{1-\rho(1-x)}
\end{align*}
(recall that $f^-(0)=f^+(0)=0$ must hold). The function $f^-(x)=x$ satisfies the above condition for $\rho=2.06.$ Thus we take $f^-(x)=x$, as this choice is an easy candidate for the analysis.

Similarly, for $\sigma=(\PlusSign,\PlusSign,\MinusSign)$ and edge lengths $(x,x,2x)$,
\begin{align*}
    \mathcal{P}=-1+\rho-4x-4x(-2+\rho x)f^+(x)-(1+\rho-2\rho x)f^+(x)^2.
\end{align*}
Solving $\mathcal{P}\geq 0$ in terms of $f^+(x)$ gives us the following lower bound on $f^{+}(x)$ for $x\in[0,\frac{1}{2}]$
\begin{align*}
    f^+(x)\geq \frac{8x-4\rho x^2-\sqrt{(4\rho x^2-8x)^2-4(1-\rho+4x)(1+\rho -2\rho x)}}{2(1+\rho-2\rho x)}
\end{align*}

Finally, for $\sigma=(\PlusSign,\PlusSign,\PlusSign)$ and edge lengths $(x,x,0)$, $\mathcal{P}\geq 0$ implies the following upper bound on $f^{+}$ for $x\in[0,1]$
\begin{align*}
    f^+(x)\leq 1-\sqrt{1-\rho x}.
\end{align*}
Furthermore, it can be shown that $f^+,f^-$ defined in Theorem~\ref{main_thm_CMSY} satisfy the above conditions for $\rho=2.06$, $a=0.19$ and $b=0.5095$. Now we give a complete analysis of the signature $\sigma=(\PlusSign,\MinusSign,\MinusSign)$.
\begin{lemma}
Let $\sigma=(\PlusSign,\MinusSign,\MinusSign)$. Then $2\cdot LP^{\sigma}(x,y,z)\geq ALG^{\sigma}(x,y,z) $ for all possible edge lengths $(x,y,z)$ satisfying the triangle inequality.
\end{lemma}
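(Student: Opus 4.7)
The plan is to substitute the signature $\sigma=(\PlusSign,\MinusSign,\MinusSign)$ into equations~(\ref{cost_function_CMSY}), (\ref{lp_function_CMSY}), (\ref{triangle_algo_CMSY}), and (\ref{triangle_lp_CMSY}), use the identity $f^-(t)=t$, and collect like terms. A direct computation yields
\[
2\,LP^{\sigma}(x,y,z) - ALG^{\sigma}(x,y,z) \;=\; 2(1-y)(1-z) + 2x(1-yz) + \bigl(2 - 3y - 3z + 4yz\bigr)\, f^+(x).
\]
The first two summands are manifestly nonnegative for $x,y,z\in[0,1]$, so the only source of potential negativity is the product involving $B(y,z):=2-3y-3z+4yz$, which is negative only when $y$ and $z$ are both close to $1$ (for example, $B(0.9,0.9)=-0.16$).

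The argument then splits into two cases. If $B(y,z)\geq 0$, the inequality is immediate. Otherwise I view the right-hand side as a function of $x$ with $y,z$ held fixed. On the intervals $[0,a]$ and $[b,1]$ it is linear in $x$ with nonnegative slope $2(1-yz)$; on the middle interval $[a,b]$ the substitution $f^+(x)=((x-a)/(b-a))^2$ contributes a quadratic whose leading coefficient $B/(b-a)^2$ is negative, making the full expression concave in $x$ on $[a,b]$. Consequently the minimum of the expression over $x\in[0,1]$ is attained at either $x=a$ or $x=b$. At $x=a$ we have $f^+(a)=0$, so the value reduces to $2(1-y)(1-z)+2a(1-yz)\geq 0$.

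It therefore suffices to verify nonnegativity at $x=b$, where $f^+(b)=1$ gives
\[
4 + 2b - 5(y+z) + 6yz - 2b\,yz.
\]
Setting $s=y+z$ and $p=yz$, this equals $4+2b-5s+2p(3-b)$, and I would finish with a short two-case argument based on the elementary inequality $p\geq\max(0,s-1)$ valid for $y,z\in[0,1]$. When $s\leq 1$, the bound $p\geq 0$ gives $\geq 2b-1>0$; when $s>1$, the bound $p\geq s-1$ yields $\geq -2+4b+s(1-2b)$, which, because $1-2b<0$, is minimized over $s\in(1,2]$ at $s=2$ with value exactly $0$, equality holding at $y=z=1$. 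The main obstacle is purely bookkeeping: one has to recognize that the interaction of the convex function $f^+$ with the negative coefficient $B$ forces the optimum in $x$ to the boundary of $[a,b]$, and then verify that the tight corner $y=z=1$ is the unique place where equality holds.
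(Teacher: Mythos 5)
Your algebra agrees with the paper up to the key identity: both derive $2\,LP^{\sigma}-ALG^{\sigma}=2(1-y)(1-z)+2x(1-yz)+(2-3y-3z+4yz)f^+(x)$, and your endpoint computations check out ($f^+(b)=1$ gives $4+2b-5(y+z)+(6-2b)yz$, and the $s=y+z$, $p=yz$ argument with $p\geq\max(0,s-1)$ and $2b-1>0$ is valid, with equality exactly at $y=z=1$, $x=b$). After that identity, however, the routes diverge. The paper does not case-split at all: it uses the single bound $f^{+}(x)\leq 2x$ (immediate off $[a,b]$, and on $[a,b]$ by convexity since $f^+(a)\leq 2a$, $f^+(b)\leq 2b$) to absorb the possibly negative term, obtaining $2\,LP^{\sigma}-ALG^{\sigma}\geq(1-y)(1-z)\bigl(2+3f^{+}(x)\bigr)\geq 0$ in one line; this argument only needs $f^{+}\leq 2x$ and is insensitive to the exact values of $a$ and $b$. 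Your proof instead exploits the precise piecewise form of $f^{+}$: linearity in $x$ off $[a,b]$, concavity on $[a,b]$ when $B(y,z)<0$, reduction to the endpoints, and an explicit optimization over $(y,z)$; this is heavier bookkeeping and uses $b>\tfrac12$, but it identifies the tight configuration $x=b$, $y=z=1$, which the paper's slicker bound hides. One small imprecision: your claim that the minimum over $x\in[0,1]$ is attained at $x=a$ or $x=b$ is not quite right, since the expression is nondecreasing on $[0,a]$ and its minimum there sits at $x=0$; this is harmless because on $[0,a]$ the term with $f^{+}$ vanishes and the expression equals $2(1-y)(1-z)+2x(1-yz)\geq 0$, but the case structure should be stated as: trivial on $[0,a]$, endpoint check on $[a,b]$ by concavity, and monotone on $[b,1]$.
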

\begin{proof}
Due to~(\ref{cost_function_CMSY}) and~(\ref{lp_function_CMSY}) we have

\begin{align*}
 cost(u,v\given w)&=(1-f^-(y))(1-f^+(x)), & lp(u,v\given w)&=(1-z)(1-f^-(y)f^+(x));\\
 cost(u,w\given v)&=(1-f^-(z))(1-f^+(x)), & lp(u,w\given v)&=(1-y)(1-f^-(z)f^+(x));\\
 cost(v,w\given u)&=f^-(y)+f^-(z)(1-2f^-(y)), & lp(v,w\given u)&=x(1-f^-(y)f^-(z)).
\end{align*}
Since $f^-(y)=y$ and $f^-(z)=z$,~(\ref{triangle_algo_CMSY}) and~(\ref{triangle_lp_CMSY}) imply
\begin{align*}
    ALG^{\sigma}(x,y,z)&=2-2yz-2f^+(x)+f^+(x)y+f^+(x)z,\\
    LP^{\sigma}(x,y,z)&=2+x-y-z-xyz-f^+(x)y-f^+(x)z+2f^+(x)yz.
\end{align*}
Then
\begin{align*}
   2\cdot LP^{\sigma}(x,y,z)- ALG^{\sigma}(x,y,z) =& 2(2+x-y-z-xyz-f^+(x)y-f^+(x)z+2f^+(x)yz)\\
   &-(2-2yz-2f^+(x)+f^+(x)y+f^+(x)z)\\
   =&2(1-y)(1-z)+2x(1-yz)+f^+(x)(2-3z-3y+4yz)\\
   \geq&2(1-y)(1-z)+3f^+(x)(1-z-y+yz)\\
   =&(1-y)(1-z)(2+3f^+(x))\geq 0
\end{align*}
where the first inequality follows from the assumption $2x\geq f^+(x)$. We note that the function $f^+$ is bounded above by $2x$ on $[0,1]$. In particular, for $x<a$ and $x> b$ this is clear immediately, and for $ x\in[a,b]$ we note that $f^+$ is convex and $f^+(a)\leq 2a$ and $f^+(b)\leq 2b$. 
\end{proof}

\makebibliography


\end{document}